\documentclass[12pt]{article}
\usepackage{amsfonts,amsmath,amssymb,amsthm,mathtools,bm}
\usepackage[margin=1.25in]{geometry}
\usepackage{setspace}
\usepackage{booktabs, multirow}
\usepackage{xcolor}
\usepackage[round,semicolon]{natbib}
\usepackage[colorlinks,breaklinks,urlcolor=blue,citecolor=blue,linkcolor=blue]{hyperref}

\newtheorem{example}{Example}

\newtheorem{definition}{Definition}
\newtheorem{theorem}{Theorem}
\newtheorem{axiom}{Axiom}
\newtheorem{remark}{Remark}

\newtheorem{corollary}{Corollary}
\newtheorem{proposition}{Proposition}

\begin{document}

\title{A Revealed Preference Framework for AI Alignment}

\author{Elchin Suleymanov\thanks{Department of Economics, Mitch Daniels School of Business, Purdue University, West Lafayette, Indiana, USA. Email: \href{mailto:esuleyma@purdue.edu}{\tt esuleyma@purdue.edu} .}}

\date{\today}

\maketitle

\begin{abstract}
Human decision makers increasingly delegate choices to AI agents, raising a natural question: does the AI implement the human principal's preferences or pursue its own? To study this question using revealed preference techniques, I introduce the \emph{Luce Alignment Model}, where the AI's choices are a mixture of two Luce rules, one reflecting the human's preferences and the other the AI's. I show that the AI's \emph{alignment} (similarity of human and AI preferences) can be generically identified in two settings: the laboratory setting, where both human and AI choices are observed, and the field setting, where only AI choices are observed.

\bigskip

\noindent\textsc{JEL Classification:} D01, D11, D83

\noindent\textsc{Keywords:} AI alignment, stochastic choice, revealed preference, Luce model
\end{abstract}

\thispagestyle{empty}
\clearpage
\setcounter{page}{1}

\section{Introduction}

Artificial intelligence (AI) agents are likely to play an increasing role in making choices on behalf of human users and in reshaping everyday decision-making \citep{allouah2025your,immorlica2024generative}. Historically, many AI systems played the role of a technology that assisted rather than replaced human choice. For example, recommender systems filter, rank, and personalize the set of alternatives presented to a user rather than making a selection on the user's behalf \citep{adomavicius2005toward}. Recent advances in agentic AI, together with improved memory and personalization abilities, have made fuller delegation of choice increasingly feasible, allowing human decision makers to rely on AI agents not only to screen among available options but also to make the final choice. This makes it natural to model AI agents not just as a technology available to human decision makers, but as economic agents in their own right \citep{immorlica2024generative,chen2024imperfect}.

This shift from assisted choice to delegated choice raises a natural economic question: when an AI agent chooses on behalf of a human principal, whose preferences does it implement? Does it act fully in accordance with the principal's preferences, or does it instead pursue distinct objectives that diverge from them? This question lies at the core of the AI alignment literature, which broadly seeks to ensure that AI systems behave in line with human users' intentions \citep{leike2018scalable,ji2024survey}. Much of the existing alignment literature is motivated by catastrophic risks, harmful behaviors, and the loss of control over increasingly capable AI systems \citep{amodei2016concrete,hendrycks2023catastrophic,ji2024survey}. The main focus of this paper is what can be considered a narrower but economically central question: whether a personalized AI agent making choices on behalf of a human principal is in fact implementing the principal's preferences. AI agents may perform well in safety evaluations designed to detect catastrophic risks or harmful behaviors but still make misaligned choices in delegated choice environments.

The AI alignment literature has largely approached the problem through three main channels. First, there are methods that attempt to align an AI system during its training phase. These include cooperative inverse reinforcement learning \citep{hadfield2016cooperative}, reinforcement learning from human feedback \citep{christiano2017deep}, scalable reward-modeling approaches more generally \citep{leike2018scalable}, and constitutional AI \citep{bai2022constitutional}. Second, there are ex-post evaluation methods that seek evidence of misalignment through benchmarks and behavioral tests \citep{perez2023discovering,ji2024survey}. Third, there are interpretability-based approaches that attempt to understand the model's internal objectives and reasoning processes by analyzing the inner structures of neural networks \citep{rauker2023transparent}. This paper adopts a complementary approach grounded in revealed preference analysis. Applied to human agents, the revealed preference approach attempts to infer preferences from observed choices rather than from the processes inside the human brain. Treating AI agents as economic agents, we can extend the same approach to choices made by an AI. In particular, the goal is to infer the extent of AI misalignment by analyzing the stochastic choice data it generates while acting on behalf of its principal.

The setup in this paper is as follows. An analyst observes the stochastic choice data $\rho^{AI}$ generated by an AI agent. That is, the AI faces varying menus $S$ repeatedly and makes choices from them on behalf of some human user. I consider two natural settings. In the laboratory setting, the analyst also observes the human principal's stochastic choices $\rho^H$. For example, $\rho^H$ may be elicited directly from the human principal or generated synthetically for the purpose of guiding the AI. In the field setting, only $\rho^{AI}$ is observed. In both settings, the aim of the analyst is to recover two key objects of interest: the degree to which the AI's intrinsic preferences match the human principal's preferences (\emph{alignment}) and the extent to which the AI defers to the human principal (\emph{compliance}). These two are distinct concepts. For example, a misaligned AI that is highly compliant may still generate choice data that closely matches the human principal's. Alternatively, a perfectly aligned AI will replicate the human principal's choices regardless of its compliance level.

To formalize this distinction, I introduce the \textit{Luce Alignment Model (LAM)}, where the AI's choices from a menu $S$ are modeled as
\begin{equation}\label{eq:LAMint}
    \rho^{AI}(x,S)=\alpha \cdot \frac{u(x)}{\sum_{y\in S} u(y)}+(1-\alpha)\cdot\frac{v(x)}{\sum_{y\in S} v(y)}.
\end{equation}
Here, $u$ represents the human principal's utility, $v$ represents the AI's intrinsic utility, and $\alpha\in [0,1]$ is the compliance parameter that captures the extent to which the AI defers to the human principal. The comparison of $u$ and $v$ in turn captures the AI's alignment with the human principal. The first term in equation~\eqref{eq:LAMint} can be interpreted as the human principal's stochastic choice rule $\rho^H$, and the second term, denoted $\rho^A$, as the AI agent's autonomous stochastic choice rule. The model can then equivalently be written as
\[
    \rho^{AI}(x,S)=\alpha \cdot \rho^H(x,S)+(1-\alpha)\cdot\rho^A(x,S).
\]
In both settings, the goal of the analyst is to infer $\alpha$, $u$, and $v$ from observed stochastic choice data. 

The main results of the paper address the identification of alignment and compliance in both settings. In the laboratory setting, where both $\rho^{AI}$ and $\rho^H$ are observed, I show that all the parameters of interest can be identified as long as the AI's choice data violates the Independence of Irrelevant Alternatives (IIA) property. The IIA property, which is the key implication of the Luce model, requires that the relative choice probabilities of any two alternatives are constant across menus \citep{luce1959individual}. When the AI is perfectly compliant or perfectly aligned, its choices satisfy IIA. By contrast, IIA violations reveal the presence of an intrinsic AI utility that is distinct from the human principal's utility. I introduce instability measures that capture deviations from the IIA property and provide a closed-form expression for the compliance parameter $\alpha$ using these measures. Using the recovered compliance parameter, I then show that both the human principal's utility $u$ and the AI's utility $v$ can be identified up to scale normalization. I also provide an axiomatic characterization of the model in this setting, identifying the behavioral conditions that the pair $(\rho^{AI},\rho^H)$ must satisfy in order to be consistent with LAM.

In the field setting, where only $\rho^{AI}$ is observed, there is a fundamental obstacle to separately identifying the human's and the AI's utilities. Namely, both $(u,v,\alpha)$ and $(v,u,1-\alpha)$ generate the same stochastic choice data. Thus, identification is only possible up to a label swap. Nevertheless, I provide a constructive proof showing that when there are at least four alternatives, the underlying utility pair is generically identified up to this swap. Stated alternatively, the distribution over utilities is generically identified but the labels are not. From an analyst's perspective, this is sufficient to recover the degree of misalignment, even without knowing which utility belongs to the human and which belongs to the AI. In terms of compliance, the result implies that the analyst cannot distinguish $\alpha$ from $1-\alpha$. Hence, compliance is only identifiable up to reflection about $1/2$ in the field setting unless one is willing to assume $\alpha<1/2$ (low compliance) or $\alpha>1/2$ (high compliance).

LAM draws on a long tradition in stochastic choice theory. When $\alpha\in \{0,1\}$, the model reduces to the Luce rule, one of the foundational stochastic choice models. When $\alpha\in (0,1)$, the model becomes a mixed multinomial logit (MMNL, also known as the random coefficients multinomial logit) model with binary support, or simply 2-MNL. More generally, the MMNL model was introduced by \cite{boyd1980effect} and \cite{cardell1980measuring}. \cite{mcfadden2000mixed} show that any choice model derived from random utility maximization can be approximated by an MMNL model, while \cite{saito2018axiomatizations} provides axiomatic foundations \citep[see also][]{lu2022mixed,chang2023approximating}.

\cite{fox2012random} show that the distribution over random coefficients in MMNL is uniquely identified under sufficiently rich variation in product characteristics. In contrast, in an abstract domain with menu variation, the identification problem in the mixed logit model has mostly been studied within the statistics and computer science literatures. \cite{chierichetti2018learning} study the identifiability problem in 2-MNL assuming a uniform mixing weight. \cite{tang2020learning} allows for an arbitrary mixing weight and provides a generic identification result. \cite{zhang2022identifiability} extend this result by showing that observing menus with three alternatives is sufficient for generic identification. The result in the field setting of this paper provides an alternative approach to the identification problem in 2-MNL: unlike \cite{zhang2022identifiability}, the identification is constructive, and unlike \cite{chierichetti2018learning} and \cite{tang2020learning}, the constructive procedure does not require any prior knowledge of the mixing weight.

Within decision theory literature, the closest precedent to LAM is \citet{chambers2023behavioral}, who study a model of behavioral peer influence. In their model, there are two agents and each agent's stochastic choice can be written as a mixture of the agent's own and the other agent's Luce rule. Importantly, the mixing weights in \citet{chambers2023behavioral} depend on the underlying utilities of the two agents and vary across menus, which makes their setup more suitable to study influence rather than AI alignment and compliance. \cite{chambers2023behavioral} also mention a modification of their model with menu-independent weights as a potential alternative but note that unique identification in this alternative version is not guaranteed. Lastly, \cite{manzini2018dual} study dual random utility maximization (dRUM), where the agent maximizes one of two deterministic linear orders with a fixed probability. dRUM can be viewed as a limiting case of LAM where both $\rho^H$ and $\rho^A$ are generated by deterministic utility maximization.

The rest of the paper proceeds as follows. Section~\ref{sec:model} introduces the model. Section~\ref{sec:lab} presents identification and characterization results in the laboratory setting. Section~\ref{sec:field} presents identification results in the field setting. Section~\ref{sec:conclusion} concludes.

\section{The Model}\label{sec:model}

Let $X$ be a finite set of alternatives and denote by $\mathcal{X}$ the collection of all non-empty subsets of $X$ (menus). It is assumed that $|X| = N\geq 3$. A \emph{stochastic choice function} is a mapping $\rho: X\times \mathcal{X} \to [0,1]$ such that $\rho(x,S)>0$ only if $x\in S$ and $\sum_{x \in S} \rho(x,S) = 1$ for all $S \in \mathcal{X}$. $\rho(x,S)$ denotes the probability that $x$ is chosen when the agent is faced with the menu $S$ repeatedly.

The setup is as follows. The analyst observes the stochastic choices of an AI agent, denoted $\rho^{AI}$, that acts on behalf of a human principal. The stochastic choices of the human principal, denoted $\rho^H$, may or may not be observed. I consider two empirical settings. In the \emph{laboratory setting}, both $\rho^{AI}$ and $\rho^H$ are observed. This corresponds to an experimental design where the human's and the AI's choices are elicited sequentially, with the human's choices serving as a guide for the AI. Alternatively, the human's choices might be generated synthetically for the purposes of the experiment. In the \emph{field setting}, only the AI's choices are observed. This corresponds to the scenario where the human principal fully delegates the decision-making process to the AI.

The main goal of this paper is to provide a modeling framework that can be used to analyze the alignment of AI agents with their human principals' preferences. To this end, let $\rho^A$ denote the hypothetical choices of an AI agent that acts autonomously without any human principal. I assume that in this autonomous setting, the AI's choices follow the Luce rule with some underlying utility function $v:X \rightarrow \mathbb{R}_{++}$:
\[\rho^A(x,S) = \frac{v(x)}{\sum_{y\in S}v(y)}.\]
The function $v$ can be interpreted as the intrinsic utility of the AI agent. Importantly, $\rho^A$ is not observed as the AI is always assumed to act on behalf of some human principal. 

Let $u:X \rightarrow \mathbb{R}_{++}$ denote the utility function of the human principal. The principal's choices are also assumed to be consistent with the Luce rule given by 
\[\rho^H(x,S) = \frac{u(x)}{\sum_{y\in S}u(y)}.\]
The AI's \emph{compliance parameter} $\alpha \in [0,1]$ reflects the probability that it ignores its own utility and follows the human principal. With probability $1-\alpha$, it ignores the principal and acts autonomously. The AI's observed stochastic choices are therefore given by
\[\rho^{AI}(x,S) = \alpha \cdot \rho^H(x,S) + (1-\alpha) \cdot \rho^A(x,S).\]

There are two key objects the analyst would like to identify: (i) \textit{alignment} - to what extent the utilities $u$ and $v$ are aligned, (ii) \textit{compliance} - the value of $\alpha$. I will discuss how each of these can be identified both in the laboratory and the field settings. The following definition summarizes the model.

\begin{definition}[Luce Alignment Model]\label{def:LAM}
A pair of stochastic choice functions $(\rho^{AI},\rho^H)$ is consistent with the {\bf Luce Alignment Model (LAM)} if there exist utility functions $u, v: X \to \mathbb{R}_{++}$ and a compliance parameter $\alpha \in [0,1]$ such that for all $S \in \mathcal{X}$ and $x \in S$,
\begin{equation}\label{eq:LAM}
    \rho^{AI}(x,S) = \alpha \frac{u(x)}{\sum_{y \in S} u(y)} + (1 - \alpha) \frac{v(x)}{\sum_{y \in S} v(y)}\quad \text{and} \quad \rho^H(x,S) = \frac{u(x)}{\sum_{y \in S} u(y)}.
\end{equation}
The tuple $(u,v,\alpha)$ is called a LAM representation of $(\rho^{AI},\rho^H)$.
If only $\rho^{AI}$ is observed, then it is said to be consistent with LAM if it satisfies the expression in equation~\eqref{eq:LAM} for some $u,v$, and $\alpha$.
\end{definition}

LAM nests several important special cases in terms of observed AI behavior:

\begin{itemize}
\item \textbf{Aligned AI} ($v = \lambda u$ for some $\lambda>0$). The AI agent's preferences exactly match the human principal. In this case, the degree of compliance becomes unimportant. The AI makes exactly the same probabilistic choices as the human. 

\item \textbf{Compliant AI} ($\alpha = 1$). The AI agent's preferences potentially differ from the human principal's. However, since compliance is perfect, the AI makes the exact same probabilistic choices the human would make.

\item \textbf{Misaligned AI} ($v \neq \lambda u$ for all $\lambda>0$). The AI agent's preferences differ from the human principal's and it may not be fully compliant. This is the base case where uncovering the extent of misalignment and non-compliance becomes important.

\item \textbf{Autonomous AI} ($\alpha = 0$). The AI potentially has its own distinct preferences and operates fully autonomously ignoring its human principal.

\item \textbf{Adversarial AI} ($v = \lambda u^{-1}$ for some $\lambda>0$). The AI agent's preferences are exactly the opposite of the human's, with the top ordinally ranked alternative by the human being ranked as the worst by the AI. 
\end{itemize}

While observed choices are exactly the same in the first two cases where the AI is either perfectly aligned or perfectly compliant, they represent the boundary cases of scenarios with completely different implications. In the first case where we start with perfect alignment, a small change in alignment will likely not have a big influence on the human principal's welfare regardless of the compliance level. In the second case where we start with perfect compliance, if the degree of misalignment is very high, a small decrease in compliance levels can have large welfare effects. This makes it important to uncover alignment and compliance separately in the base third case with misaligned AI. The fourth case models a fully autonomous AI that ignores its human principal, while the last case represents a version of extreme misalignment that provides a further structure to the model and serves as a natural adversarial benchmark. 

\section{Laboratory Data}\label{sec:lab}

This section analyzes the laboratory setting, where both the AI's choices $\rho^{AI}$ and the human principal's choices $\rho^H$ are observable. First, I discuss the identification of the AI's alignment and compliance. I then provide an axiomatic characterization of the model using the pair $(\rho^{AI}, \rho^H)$ as the observed primitive. 

\subsection{Identification}\label{subsec:lab-id}

Consider a pair $(\rho^{AI}, \rho^H)$ consistent with LAM. Under LAM, the human principal's choices are consistent with the Luce rule.\footnote{Since the laboratory setting can utilize synthetically generated choice data from a hypothetical human principal, assuming these choices follow the Luce rule is not a substantive restriction. In the field setting, on the other hand, the human principal's choices are unobserved and therefore not explicitly modeled. The implicit modeling assumption in this setting is that the AI perceives its human principal as a Luce agent with utility $u$.} A stochastic choice rule $\rho$ consistent with the Luce rule satisfies the \emph{Independence of Irrelevant Alternatives (IIA)} property of \citet{luce1959individual}: 
\[
    \frac{\rho(x,S)}{\rho(y,S)} = \frac{\rho(x,T)}{\rho(y,T)} \text{ for all } x, y \in S \cap T \text{ and } S, T\in \mathcal{X}.
\]
As the next proposition shows, the AI's choices will generally violate IIA unless $u$ and $v$ are perfectly aligned ($v=\lambda u$ for some $\lambda>0$), or the AI is either autonomous ($\alpha=0$) or perfectly compliant ($\alpha=1$). 

\begin{proposition}[IIA Violation]\label{prop:IIA}
Let $\rho^{AI}$ be consistent with LAM. Then $\rho^{AI}$ satisfies IIA if and only if $\alpha \in \{0, 1\}$ or $v = \lambda u$ for some $\lambda > 0$.
\end{proposition}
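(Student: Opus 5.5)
The "if" direction is immediate. If $\alpha\in\{0,1\}$, then $\rho^{AI}$ equals one of the two Luce rules. If $v=\lambda u$, the two Luce terms coincide, so $\rho^{AI}=\rho^H$. In either case $\rho^{AI}$ is a Luce rule and therefore satisfies IIA.

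For the "only if" direction I argue by contrapositive. Suppose $\alpha\in(0,1)$ and $v$ is not proportional to $u$, and write $r(x)=v(x)/u(x)$. Non-proportionality means $r$ is non-constant. Pick $x,y$ with $r(x)\neq r(y)$; after relabeling, $r(x)<r(y)$. The goal is to show that the ratio $\rho^{AI}(x,S)/\rho^{AI}(y,S)$ takes different values at $S=\{x,y\}$ and at $S=\{x,y,z\}$ for a suitable third alternative $z$. Such a $z$ exists because $N\ge 3$.

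Write $U_S=\sum_{w\in S}u(w)$ and $V_S=\sum_{w\in S}v(w)$. Then
\[
\frac{\rho^{AI}(x,S)}{\rho^{AI}(y,S)}=\frac{\alpha u(x)/U_S+(1-\alpha)v(x)/V_S}{\alpha u(y)/U_S+(1-\alpha)v(y)/V_S}
=\frac{\alpha u(x)+(1-\alpha)v(x)\,t_S}{\alpha u(y)+(1-\alpha)v(y)\,t_S},
\qquad t_S=\frac{U_S}{V_S}.
\]
As a function of $t>0$, the map $t\mapsto (a+bt)/(c+dt)$ is a Möbius map with $a=\alpha u(x)$, $b=(1-\alpha)v(x)$, $c=\alpha u(y)$, $d=(1-\alpha)v(y)$. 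Its derivative has the sign of $bc-ad=\alpha(1-\alpha)\,u(x)u(y)\,(r(x)-r(y))$, which is nonzero. So the map is strictly monotone and hence injective. IIA on the pair $x,y$ therefore forces $t_S$ to be the same for every menu containing $x$ and $y$.

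The remaining step, which I expect to be the only real obstacle, is to show that $t_S$ changes when some $z$ is added to $\{x,y\}$. Since $U_{\{x,y,z\}}=U_{\{x,y\}}+u(z)$ and $V_{\{x,y,z\}}=V_{\{x,y\}}+v(z)$, the two values of $t$ agree if and only if $r(z)=v(z)/u(z)=V_{\{x,y\}}/U_{\{x,y\}}$. The right-hand side is a weighted average of $r(x)$ and $r(y)$ with weights $u(x)$ and $u(y)$, so it lies strictly between them.

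If $N\ge 3$ gives us a $z$ with $r(z)$ different from this average, we are done. If instead every $z\notin\{x,y\}$ has $r(z)$ equal to that average $m$, then $r(z)=m$ differs from $r(x)$, so I run the same argument on the pair $(x,z)$ and add $y$. IIA fails unless $r(y)$ equals the $u$-weighted average of $r(x)$ and $r(z)$. That average lies strictly between $r(x)$ and $m$, hence strictly below $m<r(y)$, so it cannot equal $r(y)$. Thus IIA fails in every case, which completes the contrapositive.
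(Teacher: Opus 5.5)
Your proposal is correct and follows essentially the paper's argument. The paper also shows that adding $z$ to $\{x,y\}$ preserves the $x$--$y$ ratio only when $r(z)$ equals a strict weighted average of $r(x)$ and $r(y)$, and then uses the same two-case argument, removing $y$ from $\{x,y,z\}$ when the first case fails. It phrases the key step through a menu-dependent mixture weight $\beta(c)$ rather than your strictly monotone M\"obius map in $t_S=U_S/V_S$, and it uses $r=u/v$ instead of $v/u$.
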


\begin{proof}
If $\alpha = 0$ or $\alpha = 1$, then $\rho^{AI}$ reduces to the Luce rule with the utility functions $v$ or $u$, respectively, which satisfies IIA. Alternatively, if $v = \lambda u$ for some $\lambda > 0$, then $\rho^A = \rho^H$, and hence $\rho^{AI}=\rho^H$, which satisfies IIA. 

Conversely, suppose $\alpha \in (0,1)$ and $v \neq \lambda u$ for all $\lambda > 0$. Define $r(a) = u(a)/v(a)$ for each $a \in X$. Since $v \neq \lambda u$, the function $r$ is not constant. Pick $x, y\in X$ with $r(x) \neq r(y)$ and any $z \notin \{x,y\}$. We need to show that IIA is violated for some tuple $(x,y,S,T)$ with $x,y\in S\cap T$. 

To this end, first, observe that the conditional probability of $a$ relative to $b$ in the choice set $\{a,b,c\}$ can be written as a mixture of the choice probabilities of the autonomous AI and the human principal in the choice set $\{a,b\}$, where the mixing coefficient depends on $c$. That is, for any $a,b,c\in X$, 
\begin{align*}
    \frac{\rho^{AI}(a,\{a,b,c\})}{\rho^{AI}(a,\{a,b,c\})+\rho^{AI}(b,\{a,b,c\})} = \beta(c) \cdot \frac{u(a)}{u(a)+u(b)} + (1-\beta(c)) \cdot \frac{v(a)}{v(a)+v(b)},
\end{align*}
where
\begin{align*}
    \beta(c) = \frac{\alpha \cdot \frac{u(a)+u(b)}{u(a)+u(b)+u(c)}}{\alpha \cdot \frac{u(a)+u(b)}{u(a)+u(b)+u(c)} + (1-\alpha) \cdot \frac{v(a)+v(b)}{v(a)+v(b)+v(c)}}.
\end{align*}
By definition, 
\begin{align*}
    \frac{\rho^{AI}(a,\{a,b\})}{\rho^{AI}(a,\{a,b\})+\rho^{AI}(b,\{a,b\})} = \alpha \cdot \frac{u(a)}{u(a)+u(b)} + (1-\alpha) \cdot \frac{v(a)}{v(a)+v(b)}.
\end{align*}
Comparing this to the conditional probability in $\{a,b,c\}$, notice that both are mixtures of the exact same two terms. Hence, when $c$ is removed from $\{a,b,c\}$, an IIA violation occurs if and only if both the mixing weights and the mixture components in these conditional probabilities are distinct: that is, $\beta(c)\neq \alpha$ and $u(a)/u(b)\neq v(a)/v(b)$ (or, alternatively, $r(a)\neq r(b)$). Notice that $\beta(c)=\alpha$ if and only if $(u(a)+u(b))/(v(a)+v(b)) = (u(a)+u(b)+u(c))/(v(a)+v(b)+v(c))$, which holds if and only if $r(c)=u(c)/v(c)=(u(a)+u(b))/(v(a)+v(b))$. Therefore, when $c$ is removed from $\{a,b,c\}$, an IIA violation occurs if and only if $r(a)\neq r(b)$ and $r(c)\neq (u(a)+u(b))/(v(a)+v(b))$.

Now, going back to the choice set $\{x,y,z\}$, there are two cases to consider.

\medskip 

\noindent\textbf{Case 1:} $r(z) \neq \frac{u(x)+u(y)}{v(x)+v(y)}$. Since $r(x)\neq r(y)$, by the previous argument, IIA is violated when $z$ is removed from $\{x,y,z\}$. 

\medskip 
\noindent\textbf{Case 2:} $r(z) = \frac{u(x)+u(y)}{v(x)+v(y)} = \frac{v(x)}{v(x)+v(y)}r(x)+\frac{v(y)}{v(x)+v(y)} r(y)$. Since $r(x)\neq r(y)$ and $r(z)$ is a strict weighted average of $r(x)$ and $r(y)$, it must lie strictly between $r(x)$ and $r(y)$. This implies $r(x)\neq r(z)$. Now suppose $y$ is removed from the choice set $\{x,y,z\}$. Since $r(x)\neq r(z)$, by the previous argument, IIA holds if and only if $r(y) = \frac{u(x)+u(z)}{v(x)+v(z)}$. But then $r(y)$ is a strict mixture of $r(x)$ and $r(z)$. This is clearly not possible as $r(z)$ itself is a strict mixture of $r(x)$ and $r(y)$. Hence, IIA must be violated when $y$ is removed from $\{x,y,z\}$.

\medskip 
To conclude, either the removal of $z$ or the removal of $y$ (or $x$) from $\{x,y,z\}$ leads to an IIA violation, as desired.
\end{proof}

The implication of the proposition is that IIA violations in the AI's stochastic choices indicate we are in the case of misaligned ($v \neq \lambda u$ for all $\lambda>0$) and partially compliant ($\alpha \in (0,1)$) AI. We can utilize this to recover the parameters of the model. The identification strategy proceeds in three steps.

\medskip
\noindent\textbf{Step 1: Recover $u$ from $\rho^H$.} Since the human principal's choices are consistent with the Luce rule, the identification of $u$ from $\rho^H$ is standard. Letting $u(x)=1$ for some $x\in X$, the IIA property implies that for any $y\in X$, we must have 
\[
    u(y) = \frac{\rho^H(y,S)}{\rho^H(x,S)},
\]
where $S$ can be any choice set containing $x$ and $y$. This recovers $u$ up to scale normalization.

\medskip
\noindent\textbf{Step 2: Recover $\alpha$ from $\rho^{AI}$ and $\rho^H$.} If $\rho^{AI}=\rho^H$, then the AI may be either fully compliant ($\alpha=1$) or fully aligned ($v = \lambda u$ for $\lambda>0$). We cannot distinguish between these two cases. Alternatively, if $\rho^{AI}\neq \rho^H$ and $\rho^{AI}$ exhibits no IIA violations, then we can use Proposition~\ref{prop:IIA} to infer that $\alpha=0$. This is because we cannot have $\alpha=1$ or $v=\lambda u$ with $\rho^{AI}\neq \rho^H$, which leaves $\alpha=0$ as the only possibility in the proposition. 

Now suppose $\rho^{AI}$ violates IIA. Then, there exist two choice sets $S,T\in \mathcal{X}$ and a pair of alternatives $x,y\in S\cap T$ such that 
\[
    \frac{\rho^{AI}(x,S)}{\rho^{AI}(y,S)}\neq \frac{\rho^{AI}(x,T)}{\rho^{AI}(y,T)}.
\]
The identification of $\alpha$ relies on these IIA violations. To proceed with the identification, we first need a new definition.

\begin{definition}[Instability Measures]\label{def:instability}
Let $\rho, \rho'$ be two stochastic choice functions. For any $S, T \in \mathcal{X}$ and $x,y \in S \cap T$:
\begin{enumerate}
    \item The \textbf{own instability} of $\rho$ is defined by
    \[\Delta_{xy}(S,T|\rho) = \rho(x,S)\rho(y,T) - \rho(y,S)\rho(x,T).\]
    
    \item The \textbf{cross instability} from $\rho$ to $\rho'$ is defined by
    \[\Gamma_{xy}(S,T|\rho,\rho') = \rho(x,S)\rho'(y,T) - \rho(y,S)\rho'(x,T).\]
    
    \item The \textbf{composite instability} of $\rho$ and $\rho'$ is defined by
    \[\Phi_{xy}(S,T|\rho,\rho') = \Gamma_{xy}(S,T|\rho,\rho') + \Gamma_{xy}(S,T|\rho',\rho).\]
\end{enumerate}
\end{definition}

Intuitively, own instability can be viewed as a measure of instability in the stochastic choice $\rho$ for the tuple $(x,y,S,T)$. It tells us how useful the observations from the choice set $S$ are for imputing the relative choice probabilities for $x,y$ in the choice set $T$. If $\Delta_{xy}(S,T|\rho) = 0$, then there is no IIA violation for the alternatives $x,y$ in the choice sets $S$ and $T$, and this imputation can be done perfectly. The larger $|\Delta_{xy}(S,T|\rho)|$, the less useful the observations from $S$ are for imputing choices in $T$. 

The measure of cross instability tells us how useful the information from the stochastic choice $\rho$ in the choice set $S$ is for imputing the relative choice probabilities for $\rho'$ in the choice set $T$. For example, if both $\rho$ and $\rho'$ are consistent with the Luce rule with the same underlying utility function, then this measure becomes zero. Note that the cross instability measure is generally not symmetric (i.e., $\Gamma_{xy}(S,T|\rho,\rho')$ may not be equal to $\Gamma_{xy}(S,T|\rho',\rho)$). The measure of composite instability combines the two cross instability measures, which makes it symmetric. We will later see that own and composite instabilities play an important role in identification in the laboratory setting, while cross instability plays an important role in the field setting. 

\begin{remark}\label{rem:instability-alignment}
For a stochastic choice function $\rho$ satisfying positivity ($\rho(x,S)>0$ for all $x\in S\subseteq X$), $\Delta_{xy}(S,T|\rho) = 0$ for all $(x,y,S,T)$ with $x,y\in S\cap T$ if and only if $\rho$ is consistent with the Luce rule. For two Luce stochastic choice functions $\rho$ and $\rho'$ with utility functions $u$ and $v$, respectively, the cross instabilities can be written as
\[
    \Gamma_{xy}(S,T|\rho,\rho') = \frac{u(x)v(y) - u(y)v(x)}{u(S)\,v(T)} \quad \text{and} \quad \Gamma_{xy}(S,T|\rho',\rho) = \frac{u(y)v(x) - u(x)v(y)}{u(T)\,v(S)},
\]
where $u(A) = \sum_{a \in A} u(a)$ and $v(A) = \sum_{a \in A} v(a)$ for any $A\in \mathcal{X}$. Summing the two cross instabilities yields the composite instability:
\[
    \Phi_{xy}(S,T|\rho, \rho') = \frac{[u(x)v(y) - u(y)v(x)] \cdot [u(T)v(S) - u(S)v(T)]}{u(S)u(T)v(S)v(T)}.
\]
Following arguments similar to the ones in the proof of Proposition~\ref{prop:IIA}, we can show that $\Phi_{xy}(S,T|\rho, \rho') = 0$ for all $(x,y,S,T)$ with $x,y\in S\cap T$ if and only if $v = \lambda u$ for some $\lambda > 0$. Hence, zero own instability for both agents establishes that each is consistent with the Luce rule, and zero composite instability further establishes that they share the same underlying preferences.
\end{remark}

If $\rho^{AI}$ violates IIA for some tuple $(x,y,S,T)$ with $x,y \in S \cap T$, then we must have $\Delta_{xy}(S,T|\rho^{AI})\neq 0$. The next proposition shows that the compliance parameter $\alpha$ can be recovered by evaluating the ratio $\Delta_{xy}(S,T|\rho^{AI})/\Phi_{xy}(S,T|\rho^{AI}, \rho^H)$ for this tuple. Intuitively, the AI's compliance level is revealed by comparing the instability in the AI's stochastic choices with the composite instability across both agents. If this ratio is high, then the instability in the AI's choices matches the composite instability to a large extent, revealing a high compliance level. Alternatively, if this ratio is low, then the composite instability is much higher than the instability in the AI's choices, which indicates a low compliance level. 

\begin{proposition}[Identification of $\alpha$]\label{prop:comp}
    Suppose $(\rho^{AI},\rho^H)$ is consistent with LAM. If $\rho^{AI}$ violates IIA for some tuple $(x,y,S,T)$ with $x,y\in S\cap T$, then the compliance parameter $\alpha$ can be uniquely recovered as
    \[
        \alpha = \frac{\Delta_{xy}(S,T|\rho^{AI})}{\Phi_{xy}(S,T|\rho^{AI}, \rho^H)}.
    \]
\end{proposition}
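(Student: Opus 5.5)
Write $\rho^{AI}=\alpha\rho^H+(1-\alpha)\rho^A$ and expand the own instability of $\rho^{AI}$ bilinearly. Since $\Delta_{xy}(S,T|\rho)=\rho(x,S)\rho(y,T)-\rho(y,S)\rho(x,T)$ is a bilinear form in $(\rho(\cdot,S),\rho(\cdot,T))$, substituting the mixture gives
\[
\Delta_{xy}(S,T|\rho^{AI})=\alpha^2\Delta_{xy}(S,T|\rho^H)+(1-\alpha)^2\Delta_{xy}(S,T|\rho^A)+\alpha(1-\alpha)\bigl[\Gamma_{xy}(S,T|\rho^H,\rho^A)+\Gamma_{xy}(S,T|\rho^A,\rho^H)\bigr].
\]
Both $\rho^H$ and $\rho^A$ are Luce rules, so by Remark~\ref{rem:instability-alignment} their own instabilities vanish. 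Hence
\[
\Delta_{xy}(S,T|\rho^{AI})=\alpha(1-\alpha)\,\Phi_{xy}(S,T|\rho^H,\rho^A).
\]

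Next I expand the composite instability $\Phi_{xy}(S,T|\rho^{AI},\rho^H)$. The cross instability is linear in its first argument, so
\[
\Gamma_{xy}(S,T|\rho^{AI},\rho^H)=\alpha\,\Delta_{xy}(S,T|\rho^H)+(1-\alpha)\,\Gamma_{xy}(S,T|\rho^A,\rho^H)=(1-\alpha)\,\Gamma_{xy}(S,T|\rho^A,\rho^H).
\]
By the symmetric argument in the second slot,
\[
\Gamma_{xy}(S,T|\rho^H,\rho^{AI})=(1-\alpha)\,\Gamma_{xy}(S,T|\rho^H,\rho^A).
\]
Summing the two,
\[
\Phi_{xy}(S,T|\rho^{AI},\rho^H)=(1-\alpha)\,\Phi_{xy}(S,T|\rho^H,\rho^A).
\]
Here $\Phi$ is symmetric in its two arguments, so the order of $\rho^H$ and $\rho^A$ does not matter.

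Taking the ratio gives $\Delta/\Phi=\alpha$, provided the denominator is nonzero. This is the only delicate step. By hypothesis $\Delta_{xy}(S,T|\rho^{AI})\neq0$, and the first identity gives $\Delta_{xy}(S,T|\rho^{AI})=\alpha(1-\alpha)\Phi_{xy}(S,T|\rho^H,\rho^A)$. So $\alpha\in(0,1)$ and $\Phi_{xy}(S,T|\rho^H,\rho^A)\neq 0$. The second identity then shows $\Phi_{xy}(S,T|\rho^{AI},\rho^H)=(1-\alpha)\Phi_{xy}(S,T|\rho^H,\rho^A)\neq0$, and dividing yields
\[
\frac{\Delta_{xy}(S,T|\rho^{AI})}{\Phi_{xy}(S,T|\rho^{AI},\rho^H)}=\frac{\alpha(1-\alpha)\Phi_{xy}(S,T|\rho^H,\rho^A)}{(1-\alpha)\Phi_{xy}(S,T|\rho^H,\rho^A)}=\alpha.
\]
The right-hand side depends only on observables $(\rho^{AI},\rho^H)$. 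Hence every LAM representation assigns the same $\alpha$, which gives uniqueness.
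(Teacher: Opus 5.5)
Your proof is correct and follows the paper's argument step for step. It uses the same bilinear expansion giving $\Delta_{xy}(S,T|\rho^{AI})=\alpha(1-\alpha)\Phi_{xy}(S,T|\rho^H,\rho^A)$ and the same expansion giving $\Phi_{xy}(S,T|\rho^{AI},\rho^H)=(1-\alpha)\Phi_{xy}(S,T|\rho^H,\rho^A)$, with $\Delta_{xy}(S,T|\rho^{AI})\neq 0$ forcing a nonzero denominator; your closing remark that the ratio depends only on observables, so every LAM representation shares this $\alpha$, states the uniqueness slightly more explicitly than the paper does.
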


\begin{proof}
Under LAM, $\rho^{AI}(x,S) = \alpha \cdot \rho^H(x,S) + (1-\alpha) \cdot \rho^A(x,S)$. Substituting this into $\Delta_{xy}(S,T|\rho^{AI})$,
\begin{align*}
    \Delta_{xy}(S,T|\rho^{AI}) &= \rho^{AI}(x,S)\rho^{AI}(y,T) - \rho^{AI}(x,T)\rho^{AI}(y,S) \\
    &= [\alpha \rho^H(x,S) + (1-\alpha)\rho^A(x,S)][\alpha \rho^H(y,T) + (1-\alpha)\rho^A(y,T)] \\
    &\quad - [\alpha \rho^H(x,T) + (1-\alpha)\rho^A(x,T)][\alpha \rho^H(y,S) + (1-\alpha)\rho^A(y,S)].
\end{align*}
Expanding and collecting terms by powers of $\alpha$,
we get
\begin{align*}
    \Delta_{xy}(S,T|\rho^{AI}) &= \alpha^2 \Delta_{xy}(S,T|\rho^H) + (1-\alpha)^2 \Delta_{xy}(S,T|\rho^A) + \alpha(1-\alpha) \Phi_{xy}(S,T|\rho^H, \rho^A) \\
    &= \alpha(1-\alpha) \Phi_{xy}(S,T|\rho^H, \rho^A),
\end{align*}
where the first equality follows from the definitions of $\Delta_{xy}(S,T|\rho^H)$, $\Delta_{xy}(S,T|\rho^A)$, and $\Phi_{xy}(S,T|\rho^H, \rho^A)$, and the second equality uses the fact that both $\rho^H$ and $\rho^A$ are consistent with the Luce rule, and hence $\Delta_{xy}(S,T|\rho^H) = \Delta_{xy}(S,T|\rho^A) = 0$.

Now substituting $\rho^{AI}(x,S) = \alpha \cdot \rho^H(x,S) + (1-\alpha) \cdot \rho^A(x,S)$ in the definition of $\Phi_{xy}(S,T|\rho^{AI}, \rho^H)$,
\begin{align*}
    \Phi_{xy}(S,T|\rho^{AI}, \rho^H) &= \rho^{AI}(x,S)\rho^H(y,T) + \rho^H(x,S)\rho^{AI}(y,T) \\
    &\quad - \rho^{AI}(x,T)\rho^H(y,S) - \rho^H(x,T)\rho^{AI}(y,S) \\
    &= [\alpha \rho^H(x,S) + (1-\alpha)\rho^A(x,S)]\rho^H(y,T) \\
    &\quad + \rho^H(x,S)[\alpha \rho^H(y,T) + (1-\alpha)\rho^A(y,T)] \\
    &\quad - [\alpha \rho^H(x,T) + (1-\alpha)\rho^A(x,T)]\rho^H(y,S) \\
    &\quad - \rho^H(x,T)[\alpha \rho^H(y,S) + (1-\alpha)\rho^A(y,S)] \\
    &= 2\alpha \Delta_{xy}(S,T|\rho^H) + (1-\alpha)\Phi_{xy}(S,T|\rho^H, \rho^A)\\
    &= (1-\alpha)\Phi_{xy}(S,T|\rho^H, \rho^A),
\end{align*}
where the third equality follows from the definitions of the instability measures and the last equality follows from the fact that $\rho^H$ follows the Luce rule.
Therefore,
\[
    \frac{\Delta_{xy}(S,T|\rho^{AI})}{\Phi_{xy}(S,T|\rho^{AI}, \rho^H)} = \frac{\alpha(1-\alpha)\Phi_{xy}(S,T|\rho^H, \rho^A)}{(1-\alpha)\Phi_{xy}(S,T|\rho^H, \rho^A)} = \alpha.
\]
The cancellation is valid since an IIA violation implies $\Delta_{xy}(S,T|\rho^{AI})\neq 0$ and the above derivations show that this implies $\Phi_{xy}(S,T|\rho^{AI}, \rho^H)\neq 0$.
\end{proof}

There are two immediate but non-obvious implications of the proof. The first is that, under LAM, the instability measures $\Delta_{xy}(S,T|\rho^{AI})$ and $\Phi_{xy}(S,T|\rho^{AI}, \rho^H)$ are always proportional. Hence, the compliance formula in Proposition~\ref{prop:comp} is well-defined whenever there is an IIA violation in $\rho^{AI}$, i.e., $\Delta_{xy}(S,T|\rho^{AI}) \neq 0$ automatically guarantees a non-zero denominator. Second, the expression derived for composite instability, $\Phi_{xy}(S,T|\rho^{AI}, \rho^H)=(1-\alpha)\Phi_{xy}(S,T|\rho^H, \rho^A)$, shows that the composite instability of $\rho^{AI}$ and $\rho^H$ is always zero if and only if the AI is either fully aligned or fully compliant. Since $\rho^{AI}=\rho^H$ in both cases, it follows that the expression for the compliance parameter is valid as long as $\rho^{AI}\neq \rho^H$.

\begin{corollary}\label{cor:proportionality}
If $(\rho^{AI}, \rho^H)$ is consistent with LAM, then for all tuples $(x,y,S,T)$ with $x,y\in S\cap T$,
\[
    \Delta_{xy}(S,T|\rho^{AI}) = \alpha \cdot \Phi_{xy}(S,T|\rho^{AI}, \rho^H).
\]
Moreover, the compliance parameter $\alpha$ is uniquely identified by this relationship as long as $\rho^{AI}\neq \rho^H$.
\end{corollary}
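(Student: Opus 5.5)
The plan is to reuse the two algebraic identities established in the proof of Proposition~\ref{prop:comp}. Those expansions used only the mixture structure $\rho^{AI}=\alpha\rho^H+(1-\alpha)\rho^A$ and the fact that both $\rho^H$ and $\rho^A$ are Luce, so that $\Delta_{xy}(S,T|\rho^H)=\Delta_{xy}(S,T|\rho^A)=0$. They never used the assumed IIA violation. Hence for every tuple $(x,y,S,T)$ with $x,y\in S\cap T$ we have
\[
\Delta_{xy}(S,T|\rho^{AI})=\alpha(1-\alpha)\Phi_{xy}(S,T|\rho^H,\rho^A)
\quad\text{and}\quad
\Phi_{xy}(S,T|\rho^{AI},\rho^H)=(1-\alpha)\Phi_{xy}(S,T|\rho^H,\rho^A).
\]
Multiplying the second identity by $\alpha$ gives the first, which yields the proportionality $\Delta_{xy}(S,T|\rho^{AI})=\alpha\,\Phi_{xy}(S,T|\rho^{AI},\rho^H)$ for all tuples. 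No division is involved, so the degenerate cases $\alpha\in\{0,1\}$ are covered automatically.

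For uniqueness, I will show that when $\rho^{AI}\neq\rho^H$ there is a tuple with $\Phi_{xy}(S,T|\rho^{AI},\rho^H)\neq 0$. At such a tuple the proportionality pins down $\alpha$ as the ratio $\Delta/\Phi$, computed purely from data. The argument runs as follows. If $\rho^{AI}\neq\rho^H$, then $\alpha\neq 1$ and $v\neq\lambda u$ for every $\lambda>0$, since otherwise $\rho^{AI}=\rho^H$. By Remark~\ref{rem:instability-alignment}, applied to the Luce pair $(\rho^H,\rho^A)$ with utilities $u,v$, misalignment implies $\Phi_{xy}(S,T|\rho^H,\rho^A)\neq0$ for some tuple. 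Since $1-\alpha\neq 0$, the second identity then gives $\Phi_{xy}(S,T|\rho^{AI},\rho^H)\neq 0$ at that same tuple. Any other candidate $\alpha'$ satisfying the relationship at all tuples must agree with $\alpha$ at this tuple, so $\alpha'=\alpha$. This holds even when $\alpha=0$, in which case $\rho^{AI}$ satisfies IIA and $\Delta\equiv 0$ while $\Phi\not\equiv 0$.

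The main obstacle is the claim in Remark~\ref{rem:instability-alignment} that composite instability vanishes identically only under alignment, which the paper states only sketchily. I would verify it directly from the closed form
\[
\Phi_{xy}(S,T|\rho^H,\rho^A)\propto [u(x)v(y)-u(y)v(x)]\cdot[u(T)v(S)-u(S)v(T)].
\]
Take $x,y$ with $r(x)\neq r(y)$, where $r=u/v$, so the first factor is nonzero. Then choose $S=\{x,y\}$ and $T=\{x,y,z\}$. The second factor vanishes iff $r(z)=u(\{x,y\})/v(\{x,y\})$. If this happens, switch to the pair $(x,z)$ with $S=\{x,z\}$ and $T=\{x,y,z\}$, exactly as in Case~2 of the proof of Proposition~\ref{prop:IIA}. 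The strict-mixture argument there rules out both factors vanishing simultaneously.
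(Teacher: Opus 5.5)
Your proposal is correct and follows essentially the same route as the paper: you combine the two identities from the proof of Proposition~\ref{prop:comp}, namely $\Delta_{xy}(S,T|\rho^{AI})=\alpha(1-\alpha)\Phi_{xy}(S,T|\rho^H,\rho^A)$ and $\Phi_{xy}(S,T|\rho^{AI},\rho^H)=(1-\alpha)\Phi_{xy}(S,T|\rho^H,\rho^A)$, and then use Remark~\ref{rem:instability-alignment} to conclude that $\Phi_{xy}(S,T|\rho^{AI},\rho^H)$ vanishes at every tuple only when $\alpha=1$ or $v=\lambda u$, that is, only when $\rho^{AI}=\rho^H$. You also check the Remark explicitly via its factored closed form and the strict-mixture argument from Proposition~\ref{prop:IIA}, which is a sound tightening of a step the paper only sketches.
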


The last step in identification is to recover the AI's utility function $v$. 

\medskip
\noindent\textbf{Step 3: Recover $v$ from $\rho^{AI}$ and $\rho^H$.}  As before, if $\rho^{AI}=\rho^H$, then we cannot distinguish full compliance ($\alpha=1$) from full alignment ($v = \lambda u$ for $\lambda>0$). Alternatively, if $\rho^{AI}\neq \rho^H$, then Step 2 allows us to uniquely identify the compliance parameter $\alpha$. Let $\alpha$ denote the recovered compliance level. Using the fact that $\rho^{AI}(x,S) = \alpha \cdot \rho^H(x,S)+ (1-\alpha) \cdot \rho^A(x,S),$ we can construct $\rho^A$ as
\begin{align*}
    \rho^A(x,S) = \frac{\rho^{AI}(x,S) - \alpha\rho^H(x,S)}{1-\alpha}
\end{align*}
Under LAM, $\rho^A$ is generated by the Luce rule with the utility function $v$. We can use this to construct $v$ from $\rho^A$ as in Step 1.

Theorem~\ref{thm:identification} summarizes the identification results in the laboratory setting.

\begin{theorem}[Laboratory Identification]\label{thm:identification}
    Let $(\rho^{AI}, \rho^H)$ be consistent with LAM. 
    \begin{enumerate}
        \item If $\rho^{AI}\neq\rho^H$, then $\alpha$ is uniquely identified and $u$ and $v$ are uniquely identified up to scale normalization.
        \item If $\rho^{AI}=\rho^H$, then $\alpha$ and $v$ are not separately identified and only $u$ is uniquely identified up to scale normalization.
    \end{enumerate}
\end{theorem}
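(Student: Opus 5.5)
The plan is to assemble the three identification steps already laid out, with care about what ``identified'' means. Identification will mean that any two LAM representations $(u,v,\alpha)$ and $(u',v',\alpha')$ of the same pair $(\rho^{AI},\rho^H)$ agree: $\alpha=\alpha'$, $u'=\lambda u$ and $v'=\mu v$ for some $\lambda,\mu>0$. For $u$ in both parts, I will use Step 1. Both $u$ and $u'$ generate $\rho^H$ by the Luce formula, so for any $x,y$ we get $u(y)/u(x)=\rho^H(y,\{x,y\})/\rho^H(x,\{x,y\})=u'(y)/u'(x)$. Hence $u'=\lambda u$.

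For part 1, suppose $\rho^{AI}\neq\rho^H$. I will first show that $\alpha<1$ and $v$ is not proportional to $u$. If either failed, LAM would give $\rho^{AI}=\rho^H$ directly. Next I split into two cases according to whether $\rho^{AI}$ satisfies IIA.

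If $\rho^{AI}$ violates IIA, Proposition~\ref{prop:comp} gives $\alpha=\Delta_{xy}(S,T|\rho^{AI})/\Phi_{xy}(S,T|\rho^{AI},\rho^H)$ for the violating tuple. This quantity depends only on the data, so $\alpha=\alpha'$.

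If $\rho^{AI}$ satisfies IIA, Proposition~\ref{prop:IIA} forces $\alpha\in\{0,1\}$ or $v=\lambda u$. The last two options have been excluded, so $\alpha=0$. The same reasoning applied to the other representation gives $\alpha'=0$.

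In either case $\alpha=\alpha'<1$. I then form $\rho^A=(\rho^{AI}-\alpha\rho^H)/(1-\alpha)$, which is pinned down by the data. Both $v$ and $v'$ generate this $\rho^A$ by the Luce rule, so the Step 1 argument gives $v'=\mu v$. (Alternatively, Corollary~\ref{cor:proportionality} could be cited directly for uniqueness of $\alpha$.)

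For part 2, suppose $\rho^{AI}=\rho^H$. I will exhibit non-uniqueness by taking any representation with utility $u$. Both $(u,u,\alpha)$ for every $\alpha\in[0,1]$ and $(u,w,1)$ for every $w$ represent the data. These representations have different values of $\alpha$ and values of $v$ that are not proportional to one another, so neither $\alpha$ nor $v$ is identified. Meanwhile $u$ is still identified up to scale by Step 1.

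The main subtlety is the IIA-satisfying branch of part 1. A careless argument would stall there because Proposition~\ref{prop:comp} needs a violation to apply. Ruling out $\alpha=1$ and alignment through $\rho^{AI}\neq\rho^H$ is what makes Proposition~\ref{prop:IIA} pin down $\alpha=0$, and this must hold for every representation, not just one. A second point to check is that $\alpha<1$ in this part, so that the division by $1-\alpha$ in constructing $\rho^A$ is valid.
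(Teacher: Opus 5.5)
Your proposal is correct and follows essentially the paper's own argument, which simply assembles the three identification steps: $u$ from the Luce structure of $\rho^H$; $\alpha$ from Proposition~\ref{prop:comp} when $\rho^{AI}$ violates IIA, or from Proposition~\ref{prop:IIA} (forcing $\alpha=0$) when it does not; and $v$ from the backed-out $\rho^A=(\rho^{AI}-\alpha\rho^H)/(1-\alpha)$, with the case $\rho^{AI}=\rho^H$ handled by noting that full alignment and full compliance both reproduce the data. Your write-up makes explicit what the paper leaves implicit: identification means any two representations coincide up to scale, and $\rho^{AI}\neq\rho^H$ rules out $\alpha=1$ and $v=\lambda u$ for every representation, not just one.
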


\begin{proof}
    The proof follows from the three identification steps and the results established in this section.
\end{proof}

The following example illustrates the identification result. 

\begin{example}\label{ex:identification}
    Consider $X = \{x,y,z\}$ and suppose we observe $\rho^{AI}$ and $\rho^H$ given as follows. 
    
    \begin{table}[ht]
        \centering
        \vspace{0.25cm}
        \renewcommand{\arraystretch}{1.5}
        \begin{tabular}{c|c|c|c|c|c}
            \toprule
            Agent & Option & $\{x,y,z\}$ & $\{x,y\}$ & $\{x,z\}$ & $\{y,z\}$ \\
            \midrule
            \multirow{3}{*}{$\rho^{AI}$} 
            & $x$ & $1/3$ & $7/15$ & $1/2$ & --    \\
            & $y$ & $1/3$ & $8/15$ & --    & $8/15$ \\
            & $z$ & $1/3$ & --    & $1/2$ & $7/15$ \\
            \midrule
            \multirow{3}{*}{$\rho^H$} 
            & $x$ & $1/2$ & $3/5$ & $3/4$ & --    \\
            & $y$ & $1/3$ & $2/5$ & --    & $2/3$ \\
            & $z$ & $1/6$ & --    & $1/4$ & $1/3$ \\
            \bottomrule
        \end{tabular}
        \caption{Observed choice probabilities in Example \ref{ex:identification}}
        \label{tab:ex1_data}
    \end{table}

    Normalizing $u(x)=1$, we can infer from $\rho^H$ that $u(y)=2/3$ and $u(z)=1/3$. To recover $\alpha$, we construct the two instability measures. Let $S=\{x,y,z\}$ and $T=\{x,y\}$. We first compute the instability of $\rho^{AI}$ for the tuple $(x,y,S,T)$:
    \begin{align*}
        \Delta_{xy}(S,T|\rho^{AI}) &= \rho^{AI}(x,S)\rho^{AI}(y,T)-\rho^{AI}(x,T)\rho^{AI}(y,S)\\
        &= \frac{1}{3} \cdot \frac{8}{15} - \frac{7}{15} \cdot \frac{1}{3} = \frac{1}{45}.
    \end{align*}
    Next, we compute the composite instability between $\rho^{AI}$ and $\rho^H$:
    \begin{align*}
        \Phi_{xy}(S,T|\rho^{AI}, \rho^H) &= \rho^{AI}(x,S)\rho^H(y,T) + \rho^H(x,S)\rho^{AI}(y,T) \\
        &\quad -\rho^{AI}(x,T)\rho^H(y,S) - \rho^H(x,T)\rho^{AI}(y,S) \\
        &= \frac{1}{3} \cdot \frac{2}{5} + \frac{1}{2} \cdot \frac{8}{15} - \frac{7}{15} \cdot \frac{1}{3} - \frac{3}{5} \cdot \frac{1}{3} \\
        &= \frac{2}{15} + \frac{4}{15} - \frac{7}{45} - \frac{3}{15} = \frac{2}{45}.
    \end{align*}
    By Proposition~\ref{prop:comp}, the compliance parameter is uniquely identified as $\alpha = \frac{1/45}{2/45} = 1/2$. We could similarly recover $\alpha$ using the tuple $(y,z, \{x,y,z\}, \{y,z\})$ instead. Note, however, that we cannot use the tuple $(x,z, \{x,y,z\}, \{x,z\})$ as $\rho^{AI}$ satisfies IIA for this tuple. This highlights that while $\rho^{AI}\neq \rho^H$ implies $\alpha$ is uniquely identified, not all tuples can be used for identification.

    Using $\alpha=1/2$, we can construct $\rho^A$ as follows:
    
    \begin{table}[ht]
        \centering
        \vspace{0.25cm}
        \renewcommand{\arraystretch}{1.5}
        \begin{tabular}{c|c|c|c|c|c}
            \toprule
            Agent & Option & $\{x,y,z\}$ & $\{x,y\}$ & $\{x,z\}$ & $\{y,z\}$ \\
            \midrule
            \multirow{3}{*}{$\rho^A$} 
            & $x$ & $1/6$ & $1/3$ & $1/4$ & --    \\
            & $y$ & $1/3$ & $2/3$ & --    & $2/5$ \\
            & $z$ & $1/2$ & --    & $3/4$ & $3/5$ \\
            \bottomrule
        \end{tabular}
        \caption{Recovered autonomous AI stochastic choice $\rho^A$}
        \label{tab:ex1_recovered}
    \end{table}

Normalizing $v(x)=1$, we can infer from $\rho^A$ that $v(y) = 2$ and $v(z) = 3$. Hence, we have 
    \[
        u = (1, 2/3, 1/3), \quad v = (1, 2, 3), \quad \alpha=1/2.
    \]
    Note that $u$ and $v$ induce completely opposite ordinal rankings, revealing a high degree of misalignment. 
\end{example}

\subsection{Axiomatic Characterization}\label{subsec:lab-char}

In this section, I provide an axiomatic characterization for the Luce Alignment Model taking $(\rho^{AI},\rho^H)$ as the primitive. The first two axioms are standard. Axiom~\ref{ax:pos} requires that $\rho(x,S)$ is strictly positive for any $x\in S\subseteq X$ and $\rho\in \{\rho^{AI},\rho^H\}$. Axiom~\ref{ax:lab-luce} requires that $\rho^H$ satisfies IIA, ensuring that the human principal's behavior is consistent with the Luce rule.

\begin{axiom}[Positivity]\label{ax:pos}
    For any $\rho\in \{\rho^{AI},\rho^H\}$ and $x\in S\subseteq X$, we have $\rho(x,S)>0$.
\end{axiom}

\begin{axiom}[H-IIA]\label{ax:lab-luce}
$\rho^H$ satisfies IIA.
\end{axiom}

Axiom~\ref{ax:prop} is the key axiom in ensuring that the AI compliance parameter can be identified. It requires that the own instability of $\rho^{AI}$ and the composite instability of $\rho^{AI}$ and $\rho^H$ are proportional: any change in composite instability from one tuple to another must be proportionally reflected by a change in the AI's own instability. 

\begin{axiom}[Proportionality]\label{ax:prop}
For any two tuples $(x,y,S,T)$ and $(z,t,S',T')$ with $x,y\in S\cap T$ and $z,t\in S'\cap T'$,
\[
    \Delta_{xy}(S,T|\rho^{AI}) \cdot \Phi_{zt}(S',T'|\rho^{AI}, \rho^H) = \Delta_{zt}(S',T'|\rho^{AI}) \cdot \Phi_{xy}(S,T|\rho^{AI}, \rho^H). 
\]
\end{axiom}

Axiom~\ref{ax:con-bounded} requires that the AI's own instability always shares the same sign as the composite instability, and the own instability is bounded by the composite instability. To get an intuition for this axiom, consider the case $\Delta_{xy}(S,T|\rho^{AI})>0$. This implies that if we use the AI's relative choice probabilities for $x$ and $y$ in $S$ to impute its relative choice probabilities in $T$, this will lead to an overestimation of $x$ versus $y$. The first part of the axiom then requires that the composite instability must also be strictly positive: using the AI's and the human's relative choice probabilities in $S$ to cross-impute the relative choice probabilities in $T$ will also lead to an overestimation of $x$ in aggregate. In addition, the axiom requires that the aggregate cross-imputation error must be larger than the own imputation error. Together with Axiom~\ref{ax:prop}, this property ensures that the compliance parameter is uniquely identified and bounded between zero and one.
 
\begin{axiom}[Bounded Instability]\label{ax:con-bounded}
For any tuple $(x,y,S,T)$ with $x,y\in S\cap T$,    
\[
    \Delta_{xy}(S,T|\rho^{AI}) \cdot \Phi_{xy}(S,T|\rho^{AI}, \rho^H)  \geq 0 \quad \text{and} \quad |\Delta_{xy}(S,T|\rho^{AI})|\leq |\Phi_{xy}(S,T|\rho^{AI}, \rho^H)|, 
\]
where both inequalities hold strictly if $\Delta_{xy}(S,T|\rho^{AI})\neq 0$.
\end{axiom}

The last axiom bounds the divergence between the AI's and the human's stochastic choices. Fixing own and composite instability measures and the human's stochastic choices, it provides a lower bound on the AI's stochastic choices. 

\begin{axiom}[Bounded Divergence]\label{ax:mix-bounded}
For any tuple $(x,y,S,T)$ with $x,y\in S\cap T$, menu $U$, and alternative $z\in U$,
\[
    \rho^{AI}(z,U) \cdot |\Phi_{xy}(S,T|\rho^{AI}, \rho^H)| \geq \rho^H(z,U) \cdot |\Delta_{xy}(S,T|\rho^{AI})|.
\]
Moreover, if  $\Delta_{xy}(S,T|\rho^{AI})\neq 0$, the inequality is strict.
\end{axiom}

To interpret this axiom, suppose the instability measures are strictly positive. The axiom then requires that 
\[
    \frac{\rho^{AI}(z,U)}{\rho^H(z,U) } > \frac{\Delta_{xy}(S,T|\rho^{AI})}{\Phi_{xy}(S,T|\rho^{AI}, \rho^H)}.
\]
The right-hand side of the above inequality gives us the relative imputation error: it tells us the proportion of aggregate cross-imputation error that can be explained by the AI's own imputation error. The axiom then requires that the higher the relative imputation error is, the more the AI's choice probabilities are constrained to track the human's choice probabilities.

Theorem~\ref{thm:lab-char} establishes that Axioms~\ref{ax:pos}--\ref{ax:mix-bounded} are necessary and sufficient for a LAM representation. An interesting feature of this characterization is that while there is an explicit axiom imposing the IIA property on the human's choices $\rho^H$, there is no equivalent axiom for the autonomous AI's stochastic choices $\rho^A$. Instead, the proof of the theorem shows that the IIA property for $\rho^A$ is jointly implied by the axioms.

\begin{theorem}[Laboratory Characterization]\label{thm:lab-char}
The pair $(\rho^{AI},\rho^H)$ satisfies Axioms~\ref{ax:pos}--\ref{ax:mix-bounded} if and only if it is consistent with LAM. 
\end{theorem}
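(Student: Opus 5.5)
Necessity is mostly a matter of verification using Corollary~\ref{cor:proportionality} and the identities in the proof of Proposition~\ref{prop:comp}. I will assume $u,v$ strictly positive, so positivity of $\rho^{AI}$ and $\rho^H$ is immediate, and $\rho^H$ is Luce, giving H-IIA.

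From the proof of Proposition~\ref{prop:comp} we have
\[
\Delta_{xy}(S,T|\rho^{AI})=\alpha(1-\alpha)\Phi^{HA}
\quad\text{and}\quad
\Phi_{xy}(S,T|\rho^{AI},\rho^H)=(1-\alpha)\Phi^{HA},
\]
where $\Phi^{HA}=\Phi_{xy}(S,T|\rho^H,\rho^A)$. Hence $\Delta=\alpha\Phi$ for every tuple with the same $\alpha$, which yields Proportionality by cross-multiplication. This also gives $\Delta\Phi=\alpha\Phi^2\ge 0$ and $|\Delta|=\alpha|\Phi|\le|\Phi|$. If $\Delta\neq0$, then $\alpha\in(0,1)$ and $\Phi\neq0$, so both inequalities are strict.

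For Bounded Divergence, $\rho^{AI}(z,U)\ge\alpha\rho^H(z,U)$, with strict inequality when $\alpha<1$, because $\rho^A>0$. Multiplying by $|\Phi|$ gives $\rho^{AI}|\Phi|\ge \alpha\rho^H|\Phi|=\rho^H|\Delta|$. When $\Delta\ne0$ we have $\alpha<1$ and $\Phi\ne0$, which gives strictness.

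For sufficiency I split into cases according to whether $\rho^{AI}$ has an IIA violation. In every case, H-IIA and positivity give $u$ via Step~1.

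\textbf{Case A:} $\Delta\equiv0$. Positivity and Remark~\ref{rem:instability-alignment} make $\rho^{AI}$ a Luce rule with some $v$. Take $\alpha=0$, which gives the representation $(u,v,0)$ without using the other axioms.

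\textbf{Case B:} some tuple has $\Delta_0\ne0$. Define $\alpha=\Delta_0/\Phi_0$. Bounded Instability gives $\Phi_0\ne0$ and $\alpha\in(0,1)$. Proportionality then yields $\Delta_{xy}(S,T|\rho^{AI})=\alpha\,\Phi_{xy}(S,T|\rho^{AI},\rho^H)$ for every tuple: cross-multiplying with the fixed tuple gives $\Delta\Phi_0=\Delta_0\Phi$. Bounded Divergence, applied with the fixed tuple and divided by $|\Phi_0|$, gives $\rho^{AI}(z,U)>\alpha\rho^H(z,U)$. Therefore
\[
\rho^A:=\frac{\rho^{AI}-\alpha\rho^H}{1-\alpha}
\]
is strictly positive, sums to one on each menu, and is a stochastic choice function.

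The key remaining step, which I expect to be the main obstacle, is showing that $\rho^A$ satisfies IIA. I will reverse the algebra of Proposition~\ref{prop:comp}. Writing $\rho^{AI}=\alpha\rho^H+(1-\alpha)\rho^A$ for an arbitrary $\rho^A$, the same bilinear expansion gives
\[
\Delta(\rho^{AI})=\alpha^2\Delta(\rho^H)+(1-\alpha)^2\Delta(\rho^A)+\alpha(1-\alpha)\Phi(\rho^H,\rho^A)
\]
and
\[
\Phi(\rho^{AI},\rho^H)=2\alpha\Delta(\rho^H)+(1-\alpha)\Phi(\rho^H,\rho^A).
\]
Using $\Delta(\rho^H)=0$ from H-IIA and substituting into $\Delta(\rho^{AI})=\alpha\Phi(\rho^{AI},\rho^H)$, the cross terms cancel exactly. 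This leaves $(1-\alpha)^2\Delta_{xy}(S,T|\rho^A)=0$, so $\Delta(\rho^A)\equiv0$. Positivity of $\rho^A$ and Remark~\ref{rem:instability-alignment} then give $v$ with $\rho^A$ Luce, and $(u,v,\alpha)$ is a LAM representation. The one point to check carefully is that the expansion identity is purely algebraic and does not use the Luce property of $\rho^A$.
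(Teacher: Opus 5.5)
Your proposal is correct and follows essentially the same route as the paper. For necessity, both use the identity $\Delta_{xy}(S,T|\rho^{AI})=\alpha\,\Phi_{xy}(S,T|\rho^{AI},\rho^H)$ from Corollary~\ref{cor:proportionality} together with $(1-\alpha)\rho^A>0$. For sufficiency, both split on whether $\rho^{AI}$ satisfies IIA, take $\alpha$ to be the common ratio, build $\rho^A$ (strictly positive by Bounded Divergence), and show $\Delta(\rho^A)\equiv 0$ through the purely algebraic bilinear expansion. The only difference is organizational: you derive $\Delta=\alpha\Phi$ for every tuple directly from Proportionality before the IIA step, whereas the paper does that cross-multiplication inside the IIA argument.
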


\begin{proof}
\noindent \textbf{Necessity.} Axiom~\ref{ax:pos} follows from the assumption that $u$ and $v$ are strictly positive in LAM. Axiom~\ref{ax:lab-luce} follows from the fact that $\rho^H$ is consistent with the Luce rule. Axioms~\ref{ax:prop} and \ref{ax:con-bounded} follow from Corollary~\ref{cor:proportionality}, which shows that 
\[
    \Delta_{xy}(S,T|\rho^{AI}) = \alpha \cdot \Phi_{xy}(S,T|\rho^{AI}, \rho^H).
\]
For Axiom~\ref{ax:mix-bounded}, if $\Delta_{xy}(S,T|\rho^{AI})=0$, then the inequality follows trivially. Alternatively, if $\Delta_{xy}(S,T|\rho^{AI})\neq 0$, then $\alpha$ must be strictly less than 1, as $\alpha=1$ would imply $\rho^{AI}=\rho^{H}$ and hence $\Delta_{xy}(S,T|\rho^{AI})=0$. Therefore,
\[
    \rho^{AI}(z,U) - \alpha \rho^H(z,U) = (1-\alpha) \rho^A(z,U)>0 \quad \Rightarrow \quad \rho^{AI}(z,U) > \alpha \rho^H(z,U).
\]
Substituting the result $\alpha = \Delta_{xy}(S,T|\rho^{AI})/\Phi_{xy}(S,T|\rho^{AI}, \rho^H)$ from the previous section into the above inequality yields the axiom. 

\medskip
\noindent \textbf{Sufficiency.}
Axioms~\ref{ax:pos} and \ref{ax:lab-luce} imply that $\rho^H$ is consistent with the Luce rule with some utility function $u$ that is strictly positive. There are two cases to consider.

First, suppose $\Delta_{xy}(S,T|\rho^{AI})=0$ for all $(x,y,S,T)$ with $x,y\in S\cap T$. Then, by Remark~\ref{rem:instability-alignment}, $\rho^{AI}$ is consistent with the Luce rule with some strictly positive utility function $v$. In this case, $(u,v,\alpha=0)$ is a LAM representation of $(\rho^{AI},\rho^H)$. If, in addition, $\Phi_{xy}(S,T|\rho^{AI}, \rho^H)=0$ for all $(x,y,S,T)$ with $x,y\in S\cap T$, then we must have $v=\lambda u$ for some $\lambda>0$ and $\alpha$ can be arbitrary.

Next, suppose $\Delta_{xy}(S,T|\rho^{AI})\neq 0$ for some $(x,y,S,T)$ with $x,y\in S\cap T$. By Axiom~\ref{ax:con-bounded}, 
\[
    \Delta_{xy}(S,T|\rho^{AI})\neq 0 \quad \Rightarrow \quad \Phi_{xy}(S,T|\rho^{AI}, \rho^H)\neq 0
\]
Hence, by Axiom \ref{ax:prop}, the ratio 
\[
    \frac{\Delta_{xy}(S,T|\rho^{AI})}{\Phi_{xy}(S,T|\rho^{AI}, \rho^H)}
\]  
is constant for all such tuples $(x,y,S,T)$. Let $\alpha$ denote the above ratio. Axiom \ref{ax:con-bounded} guarantees that $\alpha\in (0,1)$. 

We next define $\rho^A$ by
\begin{align*}
    \rho^A(x,S) = \frac{\rho^{AI}(x,S) - \alpha\rho^H(x,S)}{1-\alpha}.
\end{align*}
Since the requirements $\rho(x,S)=0$ for $x\notin S$ and $\sum_{x\in S}\rho^A(x,S)=1$ hold, $\rho^A$ is a valid stochastic choice function. In addition, by Axiom~\ref{ax:mix-bounded}, the numerator is strictly positive for any $x\in S$ so that $\rho^A(x,S)>0$. Rearranging the above equation yields 
\[
    \rho^{AI}(x,S) = \alpha \rho^H(x,S) + (1-\alpha)\rho^A(x,S).
\]
To conclude the proof of the theorem, we only need to show that $\rho^A$ is consistent with the Luce rule. By Remark~\ref{rem:instability-alignment}, it is sufficient to show that $\Delta_{xy}(S,T|\rho^A)=0$ for all tuples $(x,y,S,T)$ with $x,y\in S\cap T$.

Let such a tuple be given. As shown in the proof of Proposition~\ref{prop:comp}, substituting $\rho^{AI}= \alpha \cdot \rho^H + (1-\alpha) \cdot \rho^A$ back into the own and composite instability measures yields
\[
    \Delta_{xy}(S,T|\rho^{AI}) = \alpha^2 \Delta_{xy}(S,T|\rho^H) + (1-\alpha)^2 \Delta_{xy}(S,T|\rho^A) + \alpha(1-\alpha) \Phi_{xy}(S,T|\rho^H, \rho^A) 
\]
and 
\[
    \Phi_{xy}(S,T|\rho^{AI}, \rho^H) = 2\alpha\Delta_{xy}(S,T|\rho^H) + (1-\alpha)\Phi_{xy}(S,T|\rho^H, \rho^A).
\]
By Axioms~\ref{ax:pos}--\ref{ax:lab-luce} and Remark~\ref{rem:instability-alignment}, $\Delta_{xy}(S,T|\rho^H) = 0$. Hence, combining the last two expressions, we have 
\begin{align*}
    \Delta_{xy}(S,T|\rho^{AI}) &= (1-\alpha)^2 \Delta_{xy}(S,T|\rho^A) + \alpha(1-\alpha) \Phi_{xy}(S,T|\rho^H, \rho^A) \\
    &=(1-\alpha)^2 \Delta_{xy}(S,T|\rho^A) + \alpha\Phi_{xy}(S,T|\rho^{AI}, \rho^H).
\end{align*}
Consider a tuple $(z,t,S',T')$ with $z,t\in S'\cap T'$ that satisfies 
\[
    \alpha = \frac{\Delta_{zt}(S',T'|\rho^{AI})}{\Phi_{zt}(S',T'|\rho^{AI}, \rho^H)}.
\]
Substituting this into the last term of the above expression and cross-multiplying, we get
\begin{align*}
    \Delta_{xy}(S,T|\rho^{AI})\Phi_{zt}(S',T'|\rho^{AI}, \rho^H) &= (1-\alpha)^2 \Delta_{xy}(S,T|\rho^A) \Phi_{zt}(S',T'|\rho^{AI}, \rho^H) \\
    &\quad + \Delta_{zt}(S',T'|\rho^{AI}) \Phi_{xy}(S,T|\rho^{AI}, \rho^H).
\end{align*}
Since $\alpha\in (0,1)$, we must have $\Phi_{zt}(S',T'|\rho^{AI}, \rho^H)\neq 0$. In addition, by Axiom~\ref{ax:prop},
\[
     \Delta_{xy}(S,T|\rho^{AI})\Phi_{zt}(S',T'|\rho^{AI}, \rho^H) = \Delta_{zt}(S',T'|\rho^{AI}) \Phi_{xy}(S,T|\rho^{AI}, \rho^H).
\]
Therefore, the above expression can hold only if $\Delta_{xy}(S,T|\rho^A)=0$. Since the tuple $(x,y,S,T)$ was arbitrary, $\rho^A$ is consistent with the Luce rule, as desired. This concludes the proof of the theorem as we have shown that $\rho^{AI}$ is a mixture of two Luce rules, where one of the mixing parts is $\rho^H$. 
\end{proof}

\section{Field Data}\label{sec:field}

In this section, I study the Luce Alignment Model when only the AI's choices $\rho^{AI}$ are observable. This setting is important for two reasons. First, while laboratory data may be readily available, the volume of field data is expected to be much larger, which can enable richer inference about AI behavior. Second, AI behavior in the two settings may differ systematically: a sufficiently sophisticated AI may appear compliant in a monitored laboratory setting while reverting to its autonomous preferences in the field, a phenomenon known as deceptive alignment \citep{greenblatt2024alignment}. Comparing recovered compliance parameters across the two settings can provide a measure of deceptive alignment.

\subsection{Identification}\label{subsec:field-identification}

The identification problem in the field setting faces an inherent challenge: if $(u,v,\alpha)$ is a LAM representation of $\rho^{AI}$, then so is $(v,u,1-\alpha)$. That is, even if the two utility functions underlying LAM can be recovered, the data alone cannot reveal which belongs to the human principal and which to the AI agent. The utilities can therefore be identified only up to a label swap, and the compliance parameter only up to reflection about $1/2$. Note, however, that the distribution over utilities may still be uniquely identified. 

Furthermore, if $\rho^{AI}$ satisfies IIA, then the observed choice behavior can be consistent with any alignment and compliance levels: we can have either (i) $v = \lambda u$ for some $\lambda>0$ with arbitrary $\alpha \in [0,1]$, or (ii) $\alpha \in \{0,1\}$ with $v \neq \lambda u$ for any $\lambda > 0$. Hence, the identification problem is interesting only if $\rho^{AI}$ violates IIA. 

The key for the identification in this section will be the cross instability measure $\Gamma_{xy}(S,T|\rho^{AI},\rho)$ for $\rho\in \{\rho^H,\rho^A\}$, defined in Definition~\ref{def:instability}. The next proposition provides a formula for the cross instability measures in terms of $(u,v,\alpha)$.

\begin{proposition}\label{prop:cross}
    Suppose $\rho^{AI}$ is consistent with LAM with parameters $(u,v,\alpha)$, and let $\rho^H$ and $\rho^A$ be the corresponding Luce rules. Then,
    \[
        \Gamma_{xy}(S,T|\rho^{AI},\rho^H) = (1-\alpha)\,\Gamma_{xy}(S,T|\rho^A,\rho^H) = (1-\alpha)\cdot\frac{u(y)v(x)-u(x)v(y)}{u(T)\,v(S)}
    \]
    and
    \[
        \Gamma_{xy}(S,T|\rho^{AI},\rho^A) = \alpha\,\Gamma_{xy}(S,T|\rho^H,\rho^A) = \alpha\cdot\frac{u(x)v(y) - u(y)v(x)}{u(S)\,v(T)}.
    \]
\end{proposition}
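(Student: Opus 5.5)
The plan is a direct substitution argument, mirroring the proof of Proposition~\ref{prop:comp}. The key observation is that the cross instability $\Gamma_{xy}(S,T|\rho,\rho')$ is linear in its first argument $\rho$ when $\rho'$ is held fixed. This holds because it is a difference of products in which $\rho$ appears exactly once in each term.

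Writing $\rho^{AI}=\alpha\rho^H+(1-\alpha)\rho^A$ pointwise, linearity gives
\[
\Gamma_{xy}(S,T|\rho^{AI},\rho^H)=\alpha\,\Gamma_{xy}(S,T|\rho^H,\rho^H)+(1-\alpha)\,\Gamma_{xy}(S,T|\rho^A,\rho^H).
\]
By definition, $\Gamma_{xy}(S,T|\rho^H,\rho^H)=\Delta_{xy}(S,T|\rho^H)$. This term vanishes because $\rho^H$ is a Luce rule (Remark~\ref{rem:instability-alignment}), so the first equality follows. Symmetrically, expanding $\Gamma_{xy}(S,T|\rho^{AI},\rho^A)$ leaves $\alpha\,\Gamma_{xy}(S,T|\rho^H,\rho^A)+(1-\alpha)\Delta_{xy}(S,T|\rho^A)$. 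Since $\rho^A$ is also Luce, $\Delta_{xy}(S,T|\rho^A)=0$, and only the first term survives.

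For the closed forms, I will invoke the formulas in Remark~\ref{rem:instability-alignment} with $\rho=\rho^H$ (utility $u$) and $\rho'=\rho^A$ (utility $v$). The remark gives
\[
\Gamma_{xy}(S,T|\rho^H,\rho^A)=\frac{u(x)v(y)-u(y)v(x)}{u(S)v(T)}
\quad\text{and}\quad
\Gamma_{xy}(S,T|\rho^A,\rho^H)=\frac{u(y)v(x)-u(x)v(y)}{u(T)v(S)}.
\]
As a sanity check, the latter can be verified directly: it equals
\[
\frac{v(x)}{v(S)}\frac{u(y)}{u(T)}-\frac{v(y)}{v(S)}\frac{u(x)}{u(T)},
\]
which gives the stated expression after placing both terms over the common denominator $u(T)v(S)$.

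No step is a real obstacle. The only care needed is bookkeeping: the order of the arguments matters because $\Gamma$ is not symmetric, which determines whether the denominator is $u(T)v(S)$ or $u(S)v(T)$ and fixes the sign of the numerator.
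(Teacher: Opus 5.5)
Your proposal is correct and follows essentially the same route as the paper's proof. Like the paper, you substitute $\rho^{AI}=\alpha\rho^H+(1-\alpha)\rho^A$ into the first argument of $\Gamma$, kill the own-instability term $\Delta_{xy}(S,T|\rho^H)$ (resp.\ $\Delta_{xy}(S,T|\rho^A)$) using the Luce property, and read off the closed forms from Remark~\ref{rem:instability-alignment}, with the same care about argument order.
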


\begin{proof}
Substituting $\rho^{AI}(x,S) = \alpha\,\rho^H(x,S) + (1-\alpha)\,\rho^A(x,S)$ into the definition of cross instability,
\begin{align*}
    \Gamma_{xy}(S,T|\rho^{AI},\rho^H) &= \rho^{AI}(x,S)\rho^H(y,T) - \rho^{AI}(y,S)\rho^H(x,T) \\
    &= \alpha\bigl[\rho^H(x,S)\rho^H(y,T) - \rho^H(y,S)\rho^H(x,T)\bigr] \\
    &\quad + (1-\alpha)\bigl[\rho^A(x,S)\rho^H(y,T) - \rho^A(y,S)\rho^H(x,T)\bigr] \\
    &= \alpha\,\Delta_{xy}(S,T|\rho^H) + (1-\alpha)\,\Gamma_{xy}(S,T|\rho^A,\rho^H).
\end{align*}
Since $\rho^H$ is consistent with the Luce rule, $\Delta_{xy}(S,T|\rho^H) = 0$. Combining this with the result in Remark~\ref{rem:instability-alignment}, we get the first identity. The second identity follows analogously by expanding $\Gamma_{xy}(S,T|\rho^{AI},\rho^A)$ and using $\Delta_{xy}(S,T|\rho^A) = 0$.
\end{proof}

The next proposition provides a key equation that will be used in the identification of $u$ and $v$ from $\rho^{AI}$. 

\begin{proposition}\label{prop:cross2}
    Suppose $\rho^{AI}$ is consistent with LAM with parameters $(u,v,\alpha)$, and let $\rho^H$ and $\rho^A$ be the corresponding Luce rules. Let $\rho\in \{\rho^H,\rho^A\}$, $S=\{x,y,z,t\}$, and $T=\{x,y\}$, where $x,y,z,t$ are four distinct alternatives, and assume the associated cross instabilities are non-zero. Then,
    \begin{align*}
        \frac{1}{\Gamma_{xy}(S,T|\rho^{AI},\rho)} &+ \frac{1}{\Gamma_{xy}(T,T|\rho^{AI},\rho)} = \frac{1}{\Gamma_{xy}(S\setminus t, T|\rho^{AI},\rho)} + \frac{1}{\Gamma_{xy}(S\setminus z,T|\rho^{AI},\rho)}.
    \end{align*}
    
\end{proposition}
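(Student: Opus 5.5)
Using Proposition~\ref{prop:cross}, each cross instability has a simple closed form; taking reciprocals turns the identity into a linear relation among menu sums. I treat $\rho=\rho^H$ first; the case $\rho=\rho^A$ is symmetric.

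For $\rho=\rho^H$, Proposition~\ref{prop:cross} gives, for any menu $S'\supseteq T$,
\[
\Gamma_{xy}(S',T|\rho^{AI},\rho^H)=(1-\alpha)\frac{u(y)v(x)-u(x)v(y)}{u(T)\,v(S')}.
\]
The constant $C=(1-\alpha)[u(y)v(x)-u(x)v(y)]/u(T)$ does not depend on $S'$. By the non-vanishing assumption $C\neq 0$, so
\[
\frac{1}{\Gamma_{xy}(S',T|\rho^{AI},\rho^H)}=\frac{v(S')}{C}.
\]
The claimed identity therefore reduces to
\[
v(S)+v(T)=v(S\setminus t)+v(S\setminus z).
\]
With $S=\{x,y,z,t\}$ and $T=\{x,y\}$, both sides equal $2v(x)+2v(y)+v(z)+v(t)$, by additivity of $v$ over menus. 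This is the whole argument for $\rho^H$.

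For $\rho=\rho^A$, the second formula in Proposition~\ref{prop:cross} gives
\[
\Gamma_{xy}(S',T|\rho^{AI},\rho^A)=\alpha\frac{u(x)v(y)-u(y)v(x)}{u(S')\,v(T)}.
\]
The reciprocal is $u(S')/D$, where $D=\alpha[u(x)v(y)-u(y)v(x)]/v(T)$ is again nonzero by assumption. The same additivity computation, now applied to $u$, gives $u(S)+u(T)=u(S\setminus t)+u(S\setminus z)$.

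One point needs care: Proposition~\ref{prop:cross} must apply with the first menu equal to $S$, $T$, $S\setminus t$ or $S\setminus z$. All four contain $\{x,y\}$, and the formula holds for arbitrary menus, so this is fine. In particular the case $S'=T$ is allowed, since $\Gamma_{xy}(T,T|\cdot)$ is well defined. There is no real obstacle. The only subtle step is noticing that the dependence on the first menu enters solely through the denominator factor $v(S')$ (respectively $u(S')$), which makes the reciprocals additive in menu sums.
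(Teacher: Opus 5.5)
Your proof is correct and follows the paper's argument exactly. The closed form from Proposition~\ref{prop:cross} makes each reciprocal equal to $v(S')$ (resp.\ $u(S')$) divided by a nonzero constant that does not depend on $S'$, so the identity reduces to $v(S)+v(T)=v(S\setminus t)+v(S\setminus z)$, with both sides equal to $2v(x)+2v(y)+v(z)+v(t)$; your explicit treatment of the $\rho^A$ case spells out what the paper calls analogous.
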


\begin{proof}
Consider the case $\rho = \rho^H$. Letting $S=\{x,y,z,t\}$ and $T=\{x,y\}$, we know from Proposition~\ref{prop:cross} that
\[
    \Gamma_{xy}(S',T|\rho^{AI},\rho^H) = (1-\alpha)\cdot\frac{u(y)v(x) - u(x)v(y)}{u(T)\,v(S')}
\]
for any menu $S' \supseteq T$. Hence,
\begin{align*}
    \frac{1}{\Gamma_{xy}(S',T|\rho^{AI},\rho^H)} &= \frac{u(T)\,v(S')}{(1-\alpha)[u(y)v(x) - u(x)v(y)]}\\
    &= \frac{v(S')}{(1-\alpha)[u(y)v(x) - u(x)v(y)]/u(T)}.
\end{align*}
Notice that the denominator is independent of $S'$. Hence, the result holds as long as $v(S) + v(T) = v(S\setminus t) + v(S\setminus z)$. This holds trivially since both sides of the equation evaluate to $2v(x)+2v(y)+v(z)+v(t)$. The case $\rho = \rho^A$ follows analogously with $u$ replacing $v$ and vice versa.
\end{proof}

We will use this result to identify both $u$ and $v$. The identification strategy proceeds in three steps.

\medskip 

\noindent\textbf{Step 1: Recover $u(y)$ and $v(y)$ from $\rho^{AI}$ for each $y\in X$.} The identification of utility functions in the field setting involves two separate steps. First, I show how the candidate utility values for each alternative can be identified. In Step 3, I combine the prior two steps to identify the overall utility functions. 

I start with the identification of $u(y)$ for each $y\in X$, and the same process also works for $v(y)$. Assume $X$ contains at least four alternatives, and let $u(x)=1$ for some $x\in X$. Since utility functions are identified only up to scale normalization, this is without loss. For any $y\neq x$, notice that 
\[
    \rho^H(x,\{x,y\})=\frac{1}{1+u(y)} \quad \text{and}\quad \rho^H(y,\{x,y\})=\frac{u(y)}{1+u(y)}.
\]
Pick two other alternatives $z$ and $t$ distinct from $x$ and $y$, and let $S=\{x,y,z,t\}$, $T=\{x,y\}$, and $T\subseteq S'\subseteq S$. We have
\begin{align*}
    \Gamma_{xy}(S',T|\rho^{AI},\rho^H)&=\rho^{AI}(x,S')\rho^H(y,T) - \rho^{AI}(y,S')\rho^H(x,T) \\
    &=\rho^{AI}(x,S')\frac{u(y)}{1+u(y)} - \rho^{AI}(y,S')\frac{1}{1+u(y)}\\
    &=\frac{\rho^{AI}(x,S')u(y)-\rho^{AI}(y,S')}{1+u(y)}.
\end{align*}
Since $\rho^{AI}$ is observed, this is an equation in terms of one unknown $u(y)$. There are two cases to consider.

\medskip 
\noindent{\textbf{Case 1:}} $\rho^{AI}(x,S')u(y)\neq \rho^{AI}(y,S')$ for all $S'$ with $T\subseteq S'\subseteq S$. This ensures that $\Gamma_{xy}(S',T|\rho^{AI},\rho^H)\neq 0$. Utilizing Proposition~\ref{prop:cross2} with $\rho=\rho^H$ and canceling the common $(1+u(y))$ terms, we get
{\footnotesize
\begin{align*}
    \frac{1}{\rho^{AI}(x,S)u(y)-\rho^{AI}(y,S)} &+ \frac{1}{\rho^{AI}(x,T)u(y)-\rho^{AI}(y,T)}\\
    &= \frac{1}{\rho^{AI}(x,S\setminus t)u(y)-\rho^{AI}(y,S\setminus t)} +\frac{1}{\rho^{AI}(x,S\setminus z)u(y)-\rho^{AI}(y,S\setminus z)}.
\end{align*}
}%
Cross-multiplying, we get a cubic polynomial in terms of the unknown $u(y)$. Normalizing $v(x)=1$ and re-deriving Proposition~\ref{prop:cross2} with $\rho=\rho^A$ instead of $\rho^H$, we deduce that $v(y)$ must also satisfy the same polynomial, provided that $\rho^{AI}(x,S')v(y)\neq \rho^{AI}(y,S')$ for all $S'$ with $T\subseteq S'\subseteq S$.

\medskip 
\noindent{\textbf{Case 2:}} $\rho^{AI}(x,S')u(y)=\rho^{AI}(y,S')$ for some $S'$ with $T\subseteq S'\subseteq S$. Alternatively, 
\[
    \frac{\rho^{AI}(x,S')}{\rho^{AI}(y,S')}=\frac{u(x)}{u(y)}=\frac{\rho^H(x,S')}{\rho^H(y,S')},
\]
where the first equality is due to $u(x)=1$. Since we are assuming $\rho^{AI}$ violates IIA, we cannot have $\alpha=1$ by Proposition~\ref{prop:IIA}. Therefore, the above equality is possible only if 
\[
    \frac{\rho^H(x,S')}{\rho^H(y,S')}=\frac{\rho^A(x,S')}{\rho^A(y,S')} \quad \Rightarrow \quad \frac{u(x)}{u(y)}=\frac{v(x)}{v(y)}.
\]
But then the ratio $\rho^{AI}(y,\cdot)/\rho^{AI}(x,\cdot)$ must be constant and equal to $u(y)$ for all menus. Normalizing $v(x)=1$, we also get $u(y)=v(y)$. Note that in this case the polynomial formed by cross-multiplying the equation in Case 1 will either yield the utilities $u(y)$ and $v(y)$ as a unique root or the polynomial will be identically zero. If the polynomial is identically zero, then we can generically infer that we are in Case 2, which trivially recovers $u(y)$ and $v(y)$ as $\rho^{AI}(y,\{x,y\})/\rho^{AI}(x,\{x,y\})$.\footnote{A result is said to hold \textit{generically} if it fails only on a measure-zero subset of the underlying parameter space.}

\medskip 

\begin{proposition}[Identification of $u(y)$ and $v(y)$]\label{prop:field_identification}
    Suppose $\rho^{AI}$ is consistent with LAM with $(u,v,\alpha)$ such that $u(x)=v(x)=1$, and suppose $\rho^{AI}$ violates IIA. For any $y\neq x$, let $P(\kappa_y)$ be the cubic polynomial obtained by cross-multiplying the equation
    {\footnotesize
    \begin{equation}\label{eq:identification}
    \begin{aligned}
        \frac{1}{\rho^{AI}(x,S)\kappa_y-\rho^{AI}(y,S)} &+ \frac{1}{\rho^{AI}(x,T)\kappa_y-\rho^{AI}(y,T)}\\
        &= \frac{1}{\rho^{AI}(x,S\setminus t)\kappa_y - \rho^{AI}(y,S\setminus t)} +\frac{1}{\rho^{AI}(x,S\setminus z)\kappa_y-\rho^{AI}(y,S\setminus z)},
    \end{aligned}
    \end{equation}}%
    where $S=\{x,y,z,t\}$, $T=\{x,y\}$, and $z, t$ are two alternatives distinct from $x, y$. If $P(\kappa_y)$ is not identically zero, then $u(y)$ and $v(y)$ are both roots of $P(\kappa_y)$ and admissible solutions to equation~\eqref{eq:identification}. Otherwise, $u(y)=v(y)=\rho^{AI}(y,\{x,y\})/\rho^{AI}(x,\{x,y\})$ holds generically.
\end{proposition}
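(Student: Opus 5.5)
The argument reduces to the two cases already sketched before the statement, organized around the non-vanishing of the denominators in \eqref{eq:identification}. Write $a_{S'}=\rho^{AI}(x,S')$ and $b_{S'}=\rho^{AI}(y,S')$ for $T\subseteq S'\subseteq S$. Clearing denominators in \eqref{eq:identification} produces
\[
P(\kappa_y)=\sum_{S'\in\{S,T\}}\ \prod_{S''\neq S'}(a_{S''}\kappa_y-b_{S''})-\sum_{S'\in\{S\setminus t,S\setminus z\}}\ \prod_{S''\neq S'}(a_{S''}\kappa_y-b_{S''}),
\]
where each product runs over the four menus $S'' \in\{S,T,S\setminus t,S\setminus z\}$. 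This is a polynomial of degree at most three.

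\textbf{First case: the denominators do not vanish.} Suppose $a_{S'}u(y)\neq b_{S'}$ for all four menus. By the computation preceding the statement, with $u(x)=1$,
\[
\Gamma_{xy}(S',T|\rho^{AI},\rho^H)=\frac{a_{S'}u(y)-b_{S'}}{1+u(y)},
\]
which is nonzero. So the hypothesis of Proposition~\ref{prop:cross2} with $\rho=\rho^H$ holds. Multiplying the resulting identity by $1/(1+u(y))$ shows that $\kappa_y=u(y)$ solves \eqref{eq:identification}; it is admissible because no denominator vanishes. Hence $P(u(y))=0$.

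The same argument with $\rho=\rho^A$ and the normalization $v(x)=1$ shows that $v(y)$ is an admissible root. For this I use $\rho^A(y,T)=v(y)/(1+v(y))$ and the second formula of Proposition~\ref{prop:cross}.

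\textbf{Second case: some denominator vanishes.} Suppose $a_{S'}u(y)=b_{S'}$ for some $S'$. Then $\rho^{AI}(x,S')/\rho^{AI}(y,S')=u(x)/u(y)$. Since $\rho^{AI}$ violates IIA, Proposition~\ref{prop:IIA} gives $\alpha\in(0,1)$. Writing
\[
\rho^{AI}(x,S')=\alpha\rho^H(x,S')+(1-\alpha)\rho^A(x,S'),
\]
the ratio $u(x)/u(y)$ is a weighted mediant of $u(x)/u(y)$ and $v(x)/v(y)$ with strictly positive weights. It can equal $u(x)/u(y)$ only if $v(x)/v(y)=u(x)/u(y)$. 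So $u(y)=v(y)$, and every $\rho^{AI}(\cdot,S')$ assigns $x$ and $y$ probabilities in ratio $1:u(y)$. In particular $u(y)=\rho^{AI}(y,T)/\rho^{AI}(x,T)$, and $a_{S''}u(y)=b_{S''}$ for every menu $S''$.

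It follows that all four linear factors share the common root $\kappa^*=u(y)$, with $a_{S''}\kappa_y-b_{S''}=a_{S''}(\kappa_y-\kappa^*)$. Then $P(\kappa_y)=(\kappa_y-\kappa^*)^3 Q$ for a constant
\[
Q=a_Sa_Ta_{S\setminus t}a_{S\setminus z}\left(\frac1{a_S}+\frac1{a_T}-\frac1{a_{S\setminus t}}-\frac1{a_{S\setminus z}}\right).
\]
Either $P$ is nonzero with unique root $u(y)=v(y)$, or $P\equiv0$.

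\textbf{Completing the dichotomy.} It remains to show that when $P\equiv 0$ we are, generically, in the second case. This is the main obstacle, because in the first case $P$ could in principle vanish identically on a nongeneric set. I would show that, under the first case, $P\equiv0$ imposes a nontrivial polynomial restriction on $(u,v,\alpha)$. Take the leading coefficient of $P$:
\[
a_Sa_Ta_{S\setminus t}a_{S\setminus z}\left(\frac1{a_S}+\frac1{a_T}-\frac1{a_{S\setminus t}}-\frac1{a_{S\setminus z}}\right).
\]
It is a rational function of $(u,v,\alpha)$, where $a_{S'}=\alpha u(x)/u(S')+(1-\alpha)v(x)/v(S')$. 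It is not identically zero; exhibiting one parameter point, for instance $\alpha=1$ with generic $u$, where $1/u(S')$ is not additive in the required way, suffices. Its zero set therefore has measure zero. Consequently $P\equiv0$ generically forces the second case, which yields $u(y)=v(y)=\rho^{AI}(y,\{x,y\})/\rho^{AI}(x,\{x,y\})$.
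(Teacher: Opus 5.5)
Your two cases follow the paper's own argument: Proposition~\ref{prop:cross2} when no denominator vanishes, and the mediant argument forcing $u(y)=v(y)$ otherwise. Your factorization $P=(\kappa_y-\kappa^*)^3Q$ usefully sharpens what the paper only asserts.

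The gap is in the genericity step. The paper gives no argument there, and you rightly flag it as the crux. Your witness $\alpha=1$ does not show that the leading coefficient is nonzero; it is a point where the coefficient vanishes. At $\alpha=1$ (with $u(x)=1$) you have $a_{S'}=1/u(S')$, so the leading coefficient equals $a_Sa_Ta_{S\setminus t}a_{S\setminus z}\bigl(u(S)+u(T)-u(S\setminus t)-u(S\setminus z)\bigr)$. Moreover,
\[
u(S)+u(T)=u(S\setminus t)+u(S\setminus z)=2u(x)+2u(y)+u(z)+u(t)
\]
holds for every $u$, which is exactly the additivity behind Proposition~\ref{prop:cross2}. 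You have conflated additivity of $a_{S'}=1/u(S')$ with additivity of $1/a_{S'}$.

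The same vanishing occurs at $\alpha=0$, so any witness must have $\alpha\in(0,1)$. There $1/a_{S'}$ is a weighted harmonic mean of $u(S')$ and $v(S')$ and is not additive in general. For instance, at the parameters of Example~\ref{ex:field_identification},
\[
1/a_S+1/a_T-1/a_{S\setminus t}-1/a_{S\setminus z}=6+18/7-77/17-32/7=-9/17\neq 0 .
\]
With such a witness your measure-zero argument goes through. In fact it shows that $\{P\equiv 0\}$ is itself a null set.

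Two smaller points. First, in your first case you apply Proposition~\ref{prop:cross2} with $\rho=\rho^A$ without checking its hypothesis that $a_{S'}v(y)\neq b_{S'}$ for all four menus. This follows from your mediant argument with $u$ and $v$ interchanged. A vanishing $v$-denominator forces $u(y)=v(y)$, which makes every $u$-denominator vanish and contradicts the case assumption. You should say this explicitly.

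Second, in your second case with $Q\neq 0$, the root $\kappa^*=u(y)=v(y)$ makes every denominator in \eqref{eq:identification} vanish. So it is a root of $P$ but not an admissible solution. The ``admissible'' clause of the statement therefore holds only in the first case. You, like the paper, correctly claim only ``root'' in that sub-case, but you should flag that this is all that can be proved there.
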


\begin{proof}
    The proof follows from the arguments preceding the proposition.
\end{proof}

There are two important points to consider regarding this result. First, while the model has two unknown utility values $u(y)$ and $v(y)$ for each alternative $y$, the derived cubic polynomial $P(\kappa_y)$ generically has three distinct roots. Hence, solving the polynomial may yield a spurious root that is not a true utility value. However, since equation~\eqref{eq:identification} must hold for any reference pair $z$ and $t$ distinct from $x$ and $y$ and the spurious root will typically vary depending on the chosen reference pair, if the analyst has access to a fifth alternative, rederiving the polynomial using a different reference pair will generically isolate the true utility values. Thus, as long as $|X|\geq 5$, both $u(y)$ and $v(y)$ are generically identified up to scale normalization and label swaps. In addition, as detailed in Step 2, this identification procedure can be improved to require only $|X|\geq 4$.

Second, note that successfully identifying the true candidate pair $\{u(y), v(y)\}$ for each alternative $y\in X$ does not fully pin down the utility functions $u$ and $v$. To illustrate, consider three alternatives $x,y,z$ and normalize $u(x)=v(x)=1$. Suppose we have recovered candidate utility pairs $\{\kappa_y^1, \kappa_y^2\}$ and $\{\kappa_z^1, \kappa_z^2\}$. Since $u(y)$ and $u(z)$ can be either of these utility values, this leaves us with four candidate utility functions $u$: $(1,\kappa_y^1,\kappa_z^1)$, $(1,\kappa_y^1,\kappa_z^2)$, $(1,\kappa_y^2,\kappa_z^1)$, or $(1,\kappa_y^2,\kappa_z^2)$. Generalizing this insight, for $|X|=N$, identifying candidate utility pairs for each alternative still leaves us with $2^{N-1}$ candidate utility functions. To resolve this problem, we first need to recover the compliance parameter $\alpha$, as illustrated in the next step.

\medskip 
\noindent\textbf{Step 2: Recover $\alpha$ from $\rho^{AI}$.} Following Step 1, suppose we have a candidate utility pair $\{u(y),v(y)\}$ for each alternative $y\in X\setminus x$ and assume $u(x)=v(x)=1$. Construct the associated Luce choice probabilities $\rho^u(x,\{x,y\})$ and $\rho^v(x,\{x,y\})$ for each $y\neq x$. If $\{u(y),v(y)\}$ is the true utility pair up to a label swap, then the observed AI stochastic choice function $\rho^{AI}$ must satisfy one of the following equations:
\begin{align*}
    \rho^{AI}(x,\{x,y\}) &= \alpha \cdot \rho^u(x,\{x,y\}) + (1-\alpha)\cdot \rho^v(x,\{x,y\}),\\
    \rho^{AI}(x,\{x,y\}) &= (1-\alpha) \cdot \rho^u(x,\{x,y\}) + \alpha \cdot \rho^v(x,\{x,y\}), 
\end{align*}
where $\alpha$ is the compliance parameter. Hence, for each alternative $y\neq x$ and each candidate utility pair, we get two possible candidates for the compliance parameter. For the true utility pair, this identifies the compliance parameter up to reflection about $1/2$. 

Step 1 generically recovers the true utility pair for an alternative when $|X|\geq 5$. Alternatively, suppose we have three candidate utility pairs for an alternative after solving the cubic polynomial. Note that any candidate utility pair for an alternative that is not the true utility pair will imply a compliance parameter that will not be generically validated by the candidate utility pairs for other alternatives. Hence, by ensuring the consistency of the implied compliance parameter across different alternatives, we can identify the true utility pair. Adopting this approach, we only need $|X|\geq 4$, which improves upon the procedure in Step 1.

Lastly, note that while the compliance parameter is identified up to reflection about $1/2$, the distribution over utilities is generically uniquely identified. 

\medskip 
\noindent\textbf{Step 3: Recover $u$ and $v$ from $\rho^{AI}$.} Following Steps 1 and 2, for each alternative $y\in X\setminus x$, we can generically identify the true utility pair $\{u(y),v(y)\}$ up to a label swap. However, as discussed in Step 1, identifying the true pair for each alternative does not by itself determine which value belongs to $u$ and which belongs to $v$ across alternatives.

To resolve the remaining ambiguity, fix an alternative $y\in X\setminus x$ and suppose we assign $u(y)=\kappa_y^1$. By Step 2, this assignment implies a candidate compliance parameter $\alpha^u$. Now consider any other alternative $z\in X\setminus \{x,y\}$ with the utility pair $\{\kappa_z^1,\kappa_z^2\}$. Since the true utility function must generate the same compliance parameter across all alternatives, we can pin down the true assignment for $u(z)$ by requiring consistency with $\alpha^u$. Generically, unless $\alpha^u\in \{0,1/2,1\}$, exactly one of the two candidate values for $u(z)$ will be consistent with $\alpha^u$. We can then repeat this procedure for all alternatives to recover the utility functions $u$ and $v$ up to a label swap.

Combining all the results in this section, we have the following identification result in the field setting.

\begin{theorem}[Field Identification]\label{thm:field_identification}
    Suppose $\rho^{AI}$ is consistent with LAM and $|X|\geq 4$. 
    \begin{enumerate}
    	\item If $\rho^{AI}$ violates IIA, then $(u,v,\alpha)$ are generically identified up to a label swap and scale normalization.
        \item If $\rho^{AI}$ satisfies IIA, then either $v=\lambda u$ for $\lambda>0$ or $\alpha \in \{0,1\}$.
    \end{enumerate}
\end{theorem}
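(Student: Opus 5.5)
Part 2 is immediate from Proposition~\ref{prop:IIA}: if $\rho^{AI}$ satisfies IIA, the ``only if'' direction gives $\alpha\in\{0,1\}$ or $v=\lambda u$. So the work is all in Part 1. There I will assemble Steps 1--3 into a single argument and isolate which degeneracies must be excluded as a measure-zero set. Throughout I normalize $u(x)=v(x)=1$ for a fixed $x$. By Proposition~\ref{prop:IIA}, an IIA violation forces $\alpha\in(0,1)$ and $v\neq\lambda u$.

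\textbf{Recovering candidate pairs.} For each $y\neq x$, pick distinct $z,t\notin\{x,y\}$, which is possible since $|X|\geq 4. Form $P(\kappa_y)$ as in Proposition~\ref{prop:field_identification}.

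\emph{When $P$ is not identically zero.} That proposition gives $u(y)$ and $v(y)$ among its at most three roots. I will argue that, generically, $u(y)\neq v(y)$ and all four denominators in \eqref{eq:identification} are nonzero at both true values. Each of these conditions fails only on a proper algebraic subset of the parameter space $(u,v,\alpha)\in\mathbb{R}^{2(N-1)}_{++}\times(0,1)$. So the candidate set for $\{u(y),v(y)\}$ is one of at most three unordered pairs of roots.

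\emph{When $P\equiv 0$.} Case 2 of Step 1 applies, and $u(y)=v(y)$ is read off directly from $\rho^{AI}(\cdot,\{x,y\})$.

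\textbf{Selecting the true pairs via $\alpha$.} For each candidate pair $\{\kappa^1,\kappa^2\}$ with $\kappa^1\neq\kappa^2$, the binary equation
\[
\rho^{AI}(x,\{x,y\})=a\,\tfrac{1}{1+\kappa^1}+(1-a)\,\tfrac{1}{1+\kappa^2}
\]
determines a unique $a$. The true pair yields $a\in\{\alpha,1-\alpha\}$.

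I then require a single $a$, up to $a\mapsto 1-a$, that is consistent across all $y\neq x$ with some candidate pair for each $y$. The main obstacle is the genericity claim here: a spurious pair for $y$ must not coincidentally reproduce $\alpha$ or $1-\alpha$, and it must not be matched by a spurious pair for another alternative $y'$.

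My plan for this is to view the spurious root of $P$ as an algebraic function of the parameters. By Vieta, it equals the cubic's root sum minus $u(y)+v(y)$, where the root sum is a rational function of the $\rho^{AI}$ values on the menus containing $y,z,t$. The coincidence $a_{\text{spurious}}(y)=\alpha$ is then a polynomial identity in the parameters. It suffices to exhibit one parameter point where it fails, since a nonzero polynomial vanishes only on a measure-zero set. I would try to find such a point by a perturbation argument: vary $u(z)$ and $v(z)$, which move the spurious root for $y$ but leave the binary probability $\rho^{AI}(x,\{x,y\})$ and $\alpha$ fixed.

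\textbf{Assigning labels.} Having the true pair for every $y$, I fix $u(y_0)=\kappa^1_{y_0}$, which determines $\alpha^u$. For every other $z$, consistency with $\alpha^u$ selects exactly one labeling, provided $\alpha\neq 1/2$ (a null set) and $u(z)\neq v(z)$. When $u(z)=v(z)$ no choice is needed. Alternatives with $P\equiv0$ carry equal values, so they impose no labeling either.

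This yields $(u,v,\alpha)$ up to the global swap $(v,u,1-\alpha)$ and the scale normalization. The finite union of exceptional algebraic sets has measure zero, which gives the stated generic identification.
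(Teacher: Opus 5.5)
Your proposal is correct and follows the paper's route. Part 2 is read off from Proposition~\ref{prop:IIA}, and Part 1 chains the cubic-root candidates of Step 1, the cross-alternative consistency of the implied compliance parameter in Step 2, and the labeling via $\alpha^u$ in Step 3, with $\alpha=1/2$ and $u(y)=v(y)$ excluded as null sets. Your Vieta-plus-perturbation sketch just makes explicit the genericity claims that the paper asserts informally (one small slip: the root sum depends on $\rho^{AI}$ on the menus $S'$ with $\{x,y\}\subseteq S'\subseteq\{x,y,z,t\}$, not on "menus containing $y,z,t$").
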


\begin{proof}
    The proof follows from the three identification steps and the results established in this section.
\end{proof}

The next example illustrates the field identification result.

\begin{example}\label{ex:field_identification}
Suppose $X = \{x,y,z,t\}$ and the AI stochastic choice data $\rho^{AI}$ is generated by the parameters
\[
u = (1,2,4,5), \quad v = (1,4/5,2/5,1/5), \quad \alpha = 3/4,
\]
as given in the following table.
\begin{table}[ht]
    \centering
    \vspace{0.25cm}
    \renewcommand{\arraystretch}{1.5}
    {\footnotesize \setlength{\tabcolsep}{3pt}
        \begin{tabular}{c|c|c|c|c|c|c|c|c|c|c|c}
            \toprule
            $\rho^{AI}(\cdot,\cdot)$ & $\{x,y,z,t\}$ & $\{x,y,z\}$ & $\{x,y,t\}$ & $\{x,z,t\}$ & $\{y,z,t\}$ & $\{x,y\}$ & $\{x,z\}$ & $\{x,t\}$ & $\{y,z\}$ & $\{y,t\}$ & $\{z,t\}$ \\
            \midrule
            $x$ & $1/6$ & $17/77$ & $7/32$ & $37/160$ &       & $7/18$ & $23/70$ & $1/3$ &       &       &            \\
            $y$ & $5/24$ & $47/154$ & $23/80$ &       & $43/154$  & $11/18$   &       &       & $5/12$  & $29/70$ &        \\
            $z$ & $7/24$ & $73/154$ &      & $29/80$  & $53/154$ &       & $47/70$ &      & $7/12$  &       & $1/2$     \\
            $t$ & $1/3$ &      & $79/160$ & $13/32$ & $29/77$     &       &       & $2/3$ &       & $41/70$  & $1/2$     \\
            \bottomrule
        \end{tabular}}
    \caption{AI stochastic choice data in Example \ref{ex:field_identification}}
    \label{tab:ex2_data}
\end{table}

To proceed with identification, we first normalize $u(x)=v(x)=1$. Consider the alternative $y$. Equation~\eqref{eq:identification} corresponding to $y$ with $S=\{x,y,z,t\}$ and $T=\{x,y\}$ is given by
\[
    \frac{1}{\frac{1}{6}\kappa_y - \frac{5}{24}} + \frac{1}{\frac{7}{18}\kappa_y - \frac{11}{18}} = \frac{1}{\frac{17}{77}\kappa_y - \frac{47}{154}} + \frac{1}{\frac{7}{32}\kappa_y - \frac{23}{80}},
\]
which simplifies to
\[
    \frac{24}{4\kappa_y - 5} + \frac{18}{7\kappa_y - 11} = \frac{154}{34\kappa_y - 47} + \frac{160}{35\kappa_y - 46}.
\]
Cross-multiplying yields a cubic polynomial in $\kappa_y$ with roots $\kappa_y^1 = 2$, $\kappa_y^2 = 4/5$, and $\kappa_y^3 = 263/196$. Thus, there are three possible utility pairs:
\[
\{2,4/5\}, \qquad \{2,263/196\}, \qquad \{4/5,263/196\}.
\]
For each utility pair, we can use the equation
\[
    \rho^{AI}(x,\{x,y\}) = \alpha \cdot \rho^u(x,\{x,y\}) + (1-\alpha)\cdot \rho^v(x,\{x,y\})
\]
to recover the implied compliance parameter up to reflection about $1/2$. This yields the following table:

\begin{center}
\begin{tabular}{c|c|c}
\toprule
Utility Pair $\{u(y),v(y)\}$ & Implied $\alpha$ & Feasibility\\
\midrule
$\{2,\;4/5\}$ & $\{3/4,\;1/4\}$ & $\checkmark$ \\
$\{2,\;263/196\}$ & $\{35/86,\;51/86\}$ & $\checkmark$ \\
$\{4/5,\;263/196\}$ & $\{-35/118,\;153/118\}$ & $\times$ \\
\bottomrule
\end{tabular}
\end{center}

At this stage, there are two feasible candidate utility pairs for $y$, inducing two feasible values of $\alpha$ up to reflection about $1/2$. We can eliminate one of them by considering the alternative $z$ or $t$. Repeating the procedure for the alternative $z$ with $T=\{x,z\}$ gives
\[
    \frac{24}{4\kappa_z - 7} + \frac{70}{23\kappa_z - 47} = \frac{154}{34\kappa_z - 73} + \frac{160}{37\kappa_z - 58},
\]
with roots $\kappa_z^1 = 4$, $\kappa_z^2 = 2/5$, and $\kappa_z^3 = 481/244$. The implied values of $\alpha$ are:

\begin{center}
\begin{tabular}{c|c|c}
\toprule
Utility Pair $\{u(z),v(z)\}$ & Implied $\alpha$ & Feasibility\\
\midrule
$\{4,\;2/5\}$ & $\{3/4,\;1/4\}$ & $\checkmark$ \\
$\{4,\;481/244\}$ & $\{9/154,\;145/154\}$ & $\checkmark$ \\
$\{2/5,\;481/244\}$ & $\{-3/142,\;145/142\}$ & $\times$ \\
\bottomrule
\end{tabular}
\end{center}

For the alternative $t$ with $T=\{x,t\}$, the corresponding equation is
\[
    \frac{6}{\kappa_t - 2} + \frac{3}{\kappa_t - 2} = \frac{160}{35\kappa_t - 79} + \frac{160}{37\kappa_t - 65}.
\]
Cross-multiplying and solving the resulting cubic polynomial gives the candidate roots $\kappa_t^1 = 5$, $\kappa_t^2 = 1/5$, and $\kappa_t^3 = 2$. However, $\kappa_t = 2$ is not a valid solution to the original equation, since it makes the left-hand side undefined. Hence, the admissible roots are $\kappa_t^1 = 5$ and $\kappa_t^2 = 1/5$, which yields:

\begin{center}
\begin{tabular}{c|c|c}
\toprule
Utility Pair $\{u(t),v(t)\}$ & Implied $\alpha$ & Feasibility \\
\midrule
$\{5,\;1/5\}$ & $\{3/4,\;1/4\}$ & $\checkmark$ \\
\bottomrule
\end{tabular}
\end{center}

The only feasible value of $\alpha$ consistent across all three alternatives is $\{3/4,1/4\}$. This uniquely recovers $u= (1,2,4,5)$ and $v = (1,4/5,2/5,1/5)$ up to the label swap and scale normalization.
\end{example}

\section{Conclusion}\label{sec:conclusion}

This paper considers a delegated choice environment where an AI agent is instructed to act on behalf of a human principal. A central concern in this environment is the potential misalignment between the AI's and the human principal's preferences. To study this problem using revealed preference techniques, I introduce the Luce Alignment Model, where the AI agent balances deference to the principal's preferences against pursuit of its own. The model makes it possible to separately identify two conceptually distinct dimensions of AI behavior: alignment, which captures the similarity between the human's and the AI's preferences, and compliance, which captures the extent to which the AI defers to the human principal.

I study the identification problem in two settings. In the laboratory setting, where both the AI's and the human principal's stochastic choices are observed, I show that violations of the Independence of Irrelevant Alternatives in the AI's choice data allow the analyst to recover both utility functions and obtain a closed-form expression for the compliance parameter. I also provide an axiomatic characterization of the model in this setting. In the field setting, where only the AI's choices are observed, a fundamental symmetry prevents an analyst from determining which recovered utility belongs to the human and which to the AI. Nevertheless, I show that when there are at least four alternatives, the underlying distribution over utilities is generically identified up to this label swap, which is sufficient to recover the degree of misalignment. 

\newpage 
\bibliographystyle{abbrvnat}
\bibliography{alignment}

@article{adomavicius2005toward,
  title   = {Toward the next generation of recommender systems: A survey of the state-of-the-art and possible extensions},
  author  = {Adomavicius, Gediminas and Tuzhilin, Alexander},
  journal = {{IEEE} Transactions on Knowledge and Data Engineering},
  volume  = {17},
  number  = {6},
  pages   = {734--749},
  year    = {2005}
}

@article{allouah2025your,
  title   = {What is your {AI} agent buying? {E}valuation, implications, and emerging questions for agentic e-commerce},
  author  = {Allouah, Amine and Besbes, Omar and Figueroa, Josu{\'e} and Kanoria, Yash and Kumar, Akshit},
  journal = {Columbia Business School Research Paper No. 381574},
  year    = {2025}
}

@article{amodei2016concrete,
  title   = {Concrete problems in {AI} safety},
  author  = {Amodei, Dario and Olah, Chris and Steinhardt, Jacob and Christiano, Paul and Schulman, John and Man{\'e}, Dan},
  journal = {arXiv preprint arXiv:1606.06565},
  year    = {2016}
}

@article{bai2022constitutional,
  title   = {Constitutional {AI}: Harmlessness from {AI} feedback},
  author  = {Bai, Yuntao and Kadavath, Saurav and Kundu, Sandipan and Askell, Amanda and Kernion, Jackson and Jones, Andy and Chen, Anna and Goldie, Anna and Mirhoseini, Azalia and McKinnon, Cameron and others},
  journal = {arXiv preprint arXiv:2212.08073},
  year    = {2022}
}

@article{boyd1980effect,
  title     = {The effect of fuel economy standards on the {U.S.} automotive market: An hedonic demand analysis},
  author    = {Boyd, J. Hayden and Mellman, Robert E.},
  journal   = {Transportation Research Part A: General},
  volume    = {14},
  number    = {5--6},
  pages     = {367--378},
  year      = {1980},
  publisher = {Elsevier}
}

@article{cardell1980measuring,
  title     = {Measuring the societal impacts of automobile downsizing},
  author    = {Cardell, N. Scott and Dunbar, Frederick C.},
  journal   = {Transportation Research Part A: General},
  volume    = {14},
  number    = {5--6},
  pages     = {423--434},
  year      = {1980},
  publisher = {Elsevier}
}

@article{chambers2023behavioral,
  title     = {Behavioral influence},
  author    = {Chambers, Christopher P and Cuhadaroglu, Tugce and Masatlioglu, Yusufcan},
  journal   = {Journal of the European Economic Association},
  volume    = {21},
  number    = {1},
  pages     = {135--166},
  year      = {2023},
  publisher = {Oxford University Press}
}

@article{chang2023approximating,
  title   = {Approximating choice data by discrete choice models},
  author  = {Chang, Haoge and Narita, Yusuke and Saito, Kota},
  journal = {arXiv preprint arXiv:2205.01882},
  year    = {2023}
}

@techreport{chen2024imperfect,
  title       = {Imperfect recall and {AI} delegation},
  author      = {Chen, Eric Olav and Ghersengorin, Alexis and Petersen, Sami},
  institution = {Global Priorities Institute, University of Oxford},
  number      = {30-2024},
  year        = {2024}
}

@article{christiano2017deep,
  title   = {Deep reinforcement learning from human preferences},
  author  = {Christiano, Paul F and Leike, Jan and Brown, Tom and Marber, Miljan and Shlegeris, Buck and Amodei, Dario},
  journal = {Advances in Neural Information Processing Systems},
  volume  = {30},
  year    = {2017}
}

@article{fox2012random,
  title     = {The random coefficients logit model is identified},
  author    = {Fox, Jeremy T and Kim, Kyoo il and Ryan, Stephen P and Bajari, Patrick},
  journal   = {Journal of Econometrics},
  volume    = {166},
  number    = {2},
  pages     = {204--212},
  year      = {2012},
  publisher = {Elsevier}
}

@article{greenblatt2024alignment,
  title   = {Alignment faking in large language models},
  author  = {Greenblatt, Ryan and Denison, Carson and Wright, Benjamin and Roger, Fabien and MacDiarmid, Monte and Marks, Sam and Treutlein, Johannes and Belonax, Tim and Chen, Jack and Duvenaud, David and Khan, Akbir and Michael, Julian and Mindermann, Soeren and Perez, Ethan and Petrini, Linda and Uesato, Jonathan and Kaplan, Jared and Shlegeris, Buck and Bowman, Samuel R and Hubinger, Evan},
  journal = {arXiv preprint arXiv:2412.14093},
  year    = {2024}
}

@inproceedings{hadfield2016cooperative,
  title     = {Cooperative inverse reinforcement learning},
  author    = {Hadfield-Menell, Dylan and Russell, Stuart J and Abbeel, Pieter and Dragan, Anca},
  booktitle = {Advances in Neural Information Processing Systems},
  volume    = {29},
  year      = {2016}
}

@article{hendrycks2023catastrophic,
  title   = {An overview of catastrophic {AI} risks},
  author  = {Hendrycks, Dan and Mazeika, Mantas and Woodside, Thomas},
  journal = {arXiv preprint arXiv:2306.12001},
  year    = {2023}
}

@article{immorlica2024generative,
  title   = {Generative {AI} as economic agents},
  author  = {Immorlica, Nicole and Lucier, Brendan and Slivkins, Aleksandrs},
  journal = {{ACM} {SIGecom} Exchanges},
  volume  = {22},
  number  = {1},
  pages   = {93--109},
  year    = {2024}
}

@article{ji2024survey,
  title   = {A{I} alignment: A comprehensive survey},
  author  = {Ji, Jiaming and Qiu, Tianyi and Chen, Boyuan and Zhang, Borui and Lou, Hantao and Wang, Kaile and Duan, Yawen and He, Zhonghao and Zhou, Jiayi and Zhang, Zhaowei and others},
  journal = {arXiv preprint arXiv:2310.19852},
  year    = {2024}
}

@article{leike2018scalable,
  title   = {Scalable agent alignment via reward modeling: A research direction},
  author  = {Leike, Jan and Krueger, David and Everitt, Tom and Martic, Miljan and Maini, Vishal and Legg, Shane},
  journal = {arXiv preprint arXiv:1811.07871},
  year    = {2018}
}

@article{lu2022mixed,
  title   = {Mixed logit and pure characteristics models},
  author  = {Lu, Jay and Saito, Kota},
  journal = {Working paper},
  year    = {2022}
}

@book{luce1959individual,
  title     = {Individual Choice Behavior: A Theoretical Analysis},
  author    = {Luce, R Duncan},
  year      = {1959},
  publisher = {New York: Wiley}
}

@article{manzini2018dual,
  title     = {Dual random utility maximisation},
  author    = {Manzini, Paola and Mariotti, Marco},
  journal   = {Journal of Economic Theory},
  volume    = {177},
  pages     = {162--182},
  year      = {2018},
  publisher = {Elsevier}
}

@article{mcfadden2000mixed,
  title     = {Mixed {MNL} models for discrete response},
  author    = {McFadden, Daniel and Train, Kenneth},
  journal   = {Journal of Applied Econometrics},
  volume    = {15},
  number    = {5},
  pages     = {447--470},
  year      = {2000},
  publisher = {Wiley}
}

@inproceedings{perez2023discovering,
  title     = {Discovering language model behaviors with model-written evaluations},
  author    = {Perez, Ethan and Ringer, Sam and Lukosiute, Kamile and Nguyen, Karina and Chen, Edwin and Heiner, Scott and Pettit, Craig and Olsson, Catherine and Kundu, Sandipan and Kadavath, Saurav and others},
  booktitle = {Findings of the Association for Computational Linguistics: {ACL} 2023},
  pages     = {13387--13434},
  year      = {2023}
}

@inproceedings{rauker2023transparent,
  title     = {Toward transparent {AI}: A survey on interpreting the inner structures of deep neural networks},
  author    = {R{\"a}uker, Tilman and Ho, Anson and Casper, Stephen and Hadfield-Menell, Dylan},
  booktitle = {2023 {IEEE} Conference on Secure and Trustworthy Machine Learning ({SaTML})},
  pages     = {464--483},
  year      = {2023},
  doi       = {10.1109/SATML54575.2023.00039}
}

@techreport{saito2018axiomatizations,
  title       = {Axiomatizations of the Mixed Logit Model},
  author      = {Saito, Kota},
  institution = {California Institute of Technology, Social Science Working Paper 1433},
  year        = {2018}
}

@article{tang2020learning,
  title   = {Learning an arbitrary mixture of two multinomial logits},
  author  = {Tang, Wenpin},
  journal = {arXiv preprint arXiv:2007.00204},
  year    = {2020}
}

@inproceedings{chierichetti2018learning,
  title     = {Learning a mixture of two multinomial logits},
  author    = {Chierichetti, Flavio and Kumar, Ravi and Tomkins, Andrew},
  booktitle = {Proceedings of the 35th International Conference on Machine Learning ({ICML})},
  pages     = {961--969},
  year      = {2018}
}

@article{zhang2022identifiability,
  title   = {On the identifiability of mixtures of ranking models},
  author  = {Zhang, Xiaomin and Zhang, Xucheng and Loh, Po-Ling and Liang, Yingyu},
  journal = {arXiv preprint arXiv:2201.13132},
  year    = {2022}
}

\end{document}